\title{\bf Payment Schemes from Limited Information with Applications in Distributed Computing}
\author{Nikolaj I. Schwartzbach}
\institute{Dept. of Computer Science, Aarhus University}
\titlerunning{Payment Schemes from Limited Information}
\newcommand{\U}{\mathbf{U}}
\newcommand{\E}{\mathbf{E}}
\newcommand{\ps}{\textsf{PaymentScheme}_t^\delta}
\newcommand{\lp}{\textsf{LP}}
\newcommand{\K}{\mathbf{K}}
\newcommand{\T}{\mathrm{T}}
\newcommand{\X}{\mathbf{X}}
\renewcommand{\AA}[1]{\mathbf{\Psi}^{(#1)}}
\newcommand{\e}[1]{\boldsymbol{\delta}^{(#1)}}
\newcommand{\gear}[5]{%
	\foreach \i in {1,...,#1} {%
		[rotate=(\i-1)*360/#1]  (0:#2)  arc (0:#4:#2) {[rounded corners=1.5pt]
			-- (#4+#5:#3)  arc (#4+#5:360/#1-#5:#3)} --  (360/#1:#2)
}}  
\newcommand{\norm}[1]{\ensuremath\left \lVert #1 \right \rVert}
\renewcommand{\L}{\mathbf{\Lambda}}
\renewcommand{\P}{\mathbf{\Phi}}
\newcommand{\R}{\mathbb{R}}
\newcommand{\Fpvc}{\mathcal{F}_\texttt{PVC}}
\newcommand{\prob}[3]{\begin{framed}\vspace{-0.2cm}
		\begin{center}
		    #1
		\end{center}\vspace{-0.2cm}\hrulefill\\
		\noindent \textbf{Instance}: \,#2\\
		\noindent \textbf{Output}: #3\vspace{-0.1cm}
\end{framed}}
\begin{document}

\maketitle
\begin{abstract}
    We propose a generic mechanism for incentivizing behavior in an arbitrary finite game using payments. Doing so is trivial if the mechanism is allowed to observe all actions taken in the game, as this allows it to simply punish those agents who deviate from the intended strategy. Instead, we consider an abstraction where the mechanism probabilistically infers information about what happened in the game. We show that payment schemes can be used to implement any set of utilities if and only if the mechanism can essentially infer completely what happened. We show that finding an optimal payment scheme for games of perfect information is \textsf{P}-complete, and conjecture it to be \textsf{PPAD}-hard for games of imperfect information. We prove a lower bound on the size of the payments, showing that the payments must be linear in the intended level of security. We demonstrate the applicability of our model to concrete problems in distributed computing, namely decentralized commerce and secure multiparty computation, for which the payments match the lower bound asymptotically.
\end{abstract}

\clearpage
\section{Introduction}
Game theory is the study of strategic reasoning in rational agents, where rationality means the agents act to maximize their own utility. When all agents choose such a strategy, we call the resulting interaction an equilibrium. Unfortunately, the equilibrium often does not ensure the best outcome for the agents involved. The most famous example is the prisoner's dilemma where two criminals are arrested and interrogated by police in separate rooms: each criminal can either cooperate with their accomplice, or defect and give them up to the police, resulting in a reduced sentence. Here, it is well-known that cooperation is not an equilibrium, as neither criminal can trust the other not to defect, although it would be in their common interest to do so. In game theory, this inefficiency is measured using the \emph{price of anarchy} (PoA), defined as the ratio of the social optimum and the worst possible equilibrium. In seminal work \cite{poa}, Koutsoupias and Paradimitriou consider a simple model of network routing where the PoA is shown to be $\geq 1.5$. This means the lack of coordination between the agents leads to a $33\%$ loss of performance compared to the optimal setting in which the agents coordinate. While this may be problematic on its own, consequences may be more severe if the interaction we are trying to model is of legal matters or otherwise related to security. Here, a lack of coordination may lead to irreparable damage (such as leak of private information) if some agents deviate from the intended strategy.

Indeed, in cryptography, such complications are the cause of a seemingly irreconcilable gap between the worlds of rational cryptography and the classic cryptographic model: in \cite{halpern_teague}, Halpern and Teague famously show there is no deterministic bounded-time interactive protocol for secure function evaluation on private inputs involving rational agents with a certain class of utility functions, namely agents who prefer to learn the output of the function, but prefer as few other agents as possible learn the output. By contrast, there are simple and efficient \emph{actively secure} protocols when a sufficient subset of the agents are guaranteed to be honest, even when the remaining agents are allowed to deviate arbitrarily \cite{mpc_book}. A weaker notion of security called `covert security' was proposed by Aumann and Lindell in \cite{covert}. Here, agents are allowed to deviate but are caught with some constant non-zero probability. This was extended by Asharov and Orlandi in \cite{pvc} to publicly verifiable covert (PVC) security where a certificate is output that can be verified by a third party to determine if cheating has occurred. The underlying assumption of these protocols is that the cost associated with the risk of being caught outweighs the benefit of deviating. Indeed, the problem of misaligned equilibria is usually mitigated in practice by ensuring appropriate punishment for misbehaving, such as fining deviants, banning them from participating again, or subjecting them to other legal repercussions, effectively changing the utilities of the game to ensure being honest is, in fact, an equilibrium. In our example with the prisoner's dilemma, a criminal who defects might face consequences after the other criminal is released from prison, as the adage goes: ``snitches get stitches''. Sometimes, it is less clear how to punish agents, as when the games are models of interaction on the internet where agents can be anonymous. 

In this work, we consider using \emph{payments} as a generic way to incentivize the participating agents to behave in a certain way. We ask the following question.
\begin{quote}
    \em Can we design a generic mechanism that uses payments to incentivize behavior in an arbitrary game involving rational agents?
\end{quote}
Of course, this is trivial if we allow the mechanism to observe all actions taken by the parties, as this allows the mechanism to punish parties who deviate from the inteded strategy. Instead, we consider an abstraction where the mechanism is only allowed to probabilistically infer information about what happened in the game. To the best knowledge of the authors, this question has not been studied in full generality before, in a way that allows for `plug-and-play' with an arbitrary game with some probabilistic release of information.

\subsection{Related Work}
\subsubsection{Mechanism Design.}
The use of payments to incentivize behavior is a well-studied problem in mechanism design, where the payments are often called `scoring rules'. Such a rule assigns a score (payment) to each outcome of an interaction, and can e.g. be used to elicit truthful responses. In this case, we say the scoring rule is \emph{proper}, examples of which include the quadratic scoring rule and the logarithmic scoring rule \cite{Selten1998,properscoringrules}. A mechanism for which an agent maximizes their utility by reporting their beliefs truthfully is said to be \emph{truthful}. The logarithmic scoring rule is used by Prelec to implement a truthful mechanism for voting in the Bayesian truth serum (BTS) model \cite{prelec}. This was extended to `robust BTS' by Witkowski and Parkes \cite{robust_bts} that instead uses the quadratic scoring rule. Scoring rules are also used in peer prediction methods \cite{peerprediction}, Bayesian markets \cite{bayesmarket} and choice matchings \cite{choicematching}. Payments are also used more generally in the generic Vickrey-Clarke-Groves (VCG) mechanism for obtaining a socially optimal outcome \cite{vickrey,clarke,groves}. However, the VCG mechanism is fundamentally limited to games that involve distributing a set of `items' among a set of players. It is not obvious how this would apply to an arbitrary extensive-form game.

\subsubsection{Distributed Computing.}
There are numerous works in the literature that take advantage of payment schemes to incentivize the participants to behave honestly. Such payment schemes are usually implemented by deploying a smart contract on a blockchain; the protocol starts with each party submitting a `deposit' that is repaid only if they are found to act as intended. The work most related to ours is by George and Kamara \cite{alas} who propose a framework for incentivizing honesty using `adversarial level agreements' that specify damages parties must pay if found to act adversarially. We will show later that their model can be recovered as a special case of our model. Recently, Faust, Hazay, Kretzler and Schlosser propose `financially backed covert security' \cite{fbc} to punish parties who are caught deviating in a PVC protocol. Their work is focused on the cryptographic implementation, and as a result, they do not formally analyze the equilibria induced by their mechanism. In \cite{pvc2pc}, Zhu, Ding and Huang propose a protocol for 2PC that incentivizes honesty using a PVC augmented with a deposit scheme. In \cite{cloud_computing}, Dong et al. propose a protocol that uses deposit schemes to incentivize honesty in outsourcing cloud computations. BitHalo \cite{bithalo} implements an escrow using deposits and multisigs that was analyzed in \cite{bithalo_analysis} by Bigi et al. In \cite{dual_deposit}, Asgaonkar and Krishnamachari propose a smart contract for decentralized commerce of digital goods using dual deposits. This was extended by Schwartzbach to non-digital goods in \cite{schwartzbach} in a way that uses deposits optimistically. Deposits have also been used for `truth-telling mechanisms': in \cite{astraea}, Adler et al. propose a system, Astraea, that uses deposits and rewards to incentivize a group of voters to decide to validity of a proposition. Kleros \cite{kleros} uses a similar mechanism to implement a decentralized court system. 

\subsubsection{Economics.}
In the economics literature, the payment schemes that we study are known as `deposit-refund systems' \cite{deposit_refund}. They are often studied in the context of environmental issues for incentivizing compliance with laws and regulations. In \cite{game_theory_bottles}, Grimes-Casey et al. propose a game-theoretic model using such deposit-refund systems to analyze consumer behavior with refillable plastic bottles. Indeed, deposit-refund systems are currently used in many countries for closing the gap between the marginal private cost and the marginal external cost of disposing of e.g. bottles, batteries, tires, and consumer electronics, see e.g. \cite{deposit_refund_practice} for an overview. Such systems can also be used at a higher level of governance: in \cite{deposit_refund2}, McEvoy studies deposit-refund systems as a means of enforcing nations to comply with international environmental agreements.

\subsection{Our Contributions}
We propose a generic mechanism to incentivize behavior in an arbitrary $n$-player finite games with the use of payments. This is a trivial task if the mechanism can observe all actions taken in the game; instead, our model assumes the mechanism can probabilistically observe actions taken by the agents. We show that payments can be used to implement any set of utilities if and only if the mechanism can essentially infer the entire execution of the game (\cref{lemma:Phi_must_be_left_invertible_to_implement_any_E}). We show that our model generalizes similar models in the literature, such as `adversarial level agreements' by George and Kamara \cite{alas} retained as a special case. We sketch how to implement the payments in a distributed setting by letting agents deploy a smart contract on a blockchain. We demonstrate how to use payments for decentralized commerce to solve the `eBay problem', such that neither buyer nor seller has an incentive to cheat.

We investigate the computational complexity of computing an optimal deposit scheme, in the sense that the payments are minimized. For games of perfect information, we observe that the problem is equivalent to linear programming under logspace reductions, thus showing the following.
\begin{theorem}[Informal]\label{thm:1_informal}
    Finding an optimal deposit scheme for a finite game of perfect information, or showing no suitable deposit scheme exists, is \textsf{P}-complete.
\end{theorem}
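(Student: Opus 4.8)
The plan is to prove the two halves of \textsf{P}-completeness by exhibiting logspace reductions in both directions between the optimal-deposit-scheme problem for games of perfect information and linear programming, and then invoking the classical fact that \textsf{LP} --- both its feasibility and its optimisation versions --- is \textsf{P}-complete under logspace reductions (e.g.\ via the Circuit Value Problem). Thus the whole argument is: optimal-deposit-scheme $\equiv_{\log}$ \textsf{LP} $\equiv_{\log}$ (\textsf{P}-complete).

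For membership in \textsf{P} I would write the search for an optimal deposit scheme directly as a linear program. The decision variables are the coordinates of the payment vectors, one per agent per possible inference of the mechanism. Since the game has perfect information and no chance moves, a pure strategy profile $\sigma$ determines a single execution, so the expected payment to agent $i$ under $\sigma$ is the linear form $\langle \P_{\sigma}, \pi_i \rangle$ in the payment vector $\pi_i$, where $\P_{\sigma}$ is the row of the inference matrix indexed by the execution of $\sigma$. Requiring the intended profile to be a subgame-perfect equilibrium of the modified game reduces, by the one-shot deviation principle (which characterises subgame-perfection in finite-horizon perfect-information games), to polynomially many linear inequalities --- one per internal node $v$ and per alternative action at $v$, comparing the honest continuation payoff from $v$ with that of the corresponding one-shot deviation --- and the conditions making the scheme a legitimate deposit scheme (nonnegativity of deposits, budget balance if imposed) are linear as well; finally ``minimise the largest deposit'' is linearised by an auxiliary variable $t$ bounding every deposit. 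The resulting LP has size polynomial in the game and in $\P$, and each of its coefficients is an iterated sum of products of the given utilities and inference probabilities, hence logspace-computable. The deposit scheme is optimal exactly when this LP is, and no suitable scheme exists exactly when it is infeasible; as $\textsf{LP}\in\textsf{P}$, so is our problem.

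For \textsf{P}-hardness I would reduce from \textsf{LP}-feasibility: given rational $A$ and $b$, decide whether $\{x \ge 0 : Ax \ge b\}$ is nonempty. I construct a degenerate one-agent game of perfect information whose root belongs to the agent, with an ``honest'' edge to a fixed leaf and one ``cheating'' edge, each to its own leaf, per inequality of the LP; I then choose the inference matrix and base utilities so that the deposit coordinates play the role of $x$, the incentive inequality produced by cheating-edge $j$ is exactly $A_j x \ge b_j$, and deposit nonnegativity supplies $x \ge 0$, so an optimal scheme exists iff the LP is feasible; the optimisation versions are interreducible by the same construction plus a small gadget (an extra deposit coordinate forced by one more cheating edge to dominate $c^{\top}x$, after rescaling the feasible region into a bounded box). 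The step I expect to require the most care is the ``tuning'' of $\P$ and the base utilities, and it is the only real obstacle: the coefficients naturally available in an incentive constraint are the entries of $\P_{\mathrm{honest}}-\P_j$, which --- $\P$ being row-stochastic --- lie in $[-1,1]$ and sum to zero, whereas $A_j$ is an arbitrary rational row. I would resolve this by first normalising the LP (clear denominators, scale each row $j$ by a suitable $\lambda_j>0$ into $[-1,1]$, and adjoin a dummy ``null inference'' column so that every row sums to zero), which preserves the feasible set, and then realise the normalised rows as differences of rows of $\P$ and the normalised right-hand sides as differences of base utilities, making the objective gadget consistent with these choices. Everything else --- the LP formulation, the one-shot-deviation characterisation, and the logspace bookkeeping --- is routine.
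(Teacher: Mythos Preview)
Your membership argument is essentially the paper's: write the incentive constraints as linear inequalities in the (vectorised) payment matrix, add the self-containment constraints, and optimise a linear objective. One caveat: the paper's security notion is $\delta$-strong \emph{$t$-robust} subgame perfection, so for $t>1$ the relevant constraints come from the \emph{$t$-inducible region} of each subgame (all leaves reachable when a coalition of size $\le t$ deviates), not from one-shot deviations. The one-shot deviation principle characterises ordinary SPE, not its coalitional refinement, so your invocation only covers $t=1$; the fix is exactly the enumeration the paper does, and the constraint count stays polynomial for fixed $t$.

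The hardness direction is where you diverge, and where there is a real gap. The paper encodes an LP via a \emph{two-player} game: $P_1$ picks a constraint-gadget, $P_2$ chooses ``satisfy'' or ``sabotage'' in it, and the emission at the satisfying leaf is the (normalised, nonnegative) column $\overline{\mathbf a}_i$, so that the SPE inequality for $P_2$ becomes $A_i x\ge b_i$. Crucially, the paper's problem allows $\infty$ entries in the cost vector, which force selected payments (in particular those attached to the symbol $\top$ and to $P_1$) to be exactly zero; this is what makes the encoded variables $x$ coincide with the remaining payments and what makes $x\ge 0$ fall out of self-containment. The paper explicitly flags this as a modelling convenience and sketches how to remove it with a third ``liquidity'' player.

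Your one-player construction tries to do the same job with a lighter gadget, but your normalisation step does not preserve feasibility as claimed. Adjoining a ``null inference'' column to make each scaled row sum to zero introduces an extra variable $x_{n+1}\ge 0$ whose coefficient in row $j$ is $-\lambda_j\sum_k A_{jk}$; the augmented system $A_j x-(\sum_k A_{jk})x_{n+1}\ge b_j$ is \emph{not} equivalent to $A_j x\ge b_j$ unless $x_{n+1}$ is pinned to zero, and with a pure minimax objective you have no mechanism to pin it. (If, say, every row sum is negative, the augmented LP is always feasible.) The paper avoids precisely this trap via the $\infty$-cost device; your sketch needs either that device, an explicit equality constraint on the dummy coordinate, or a second player to absorb the slack, none of which you provide. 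Until that is repaired, the hardness half of your argument does not go through.
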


\noindent For games of imperfect information, it is well-known that even computing an equilibrium is \textsf{PPAD}-complete, so it is unlikely there is an efficient algorithm for finding an optimal deposit scheme in these cases. As a consequence, we conjecture that finding an optimal deposit scheme for finite games of imperfect information is \textsf{PPAD}-hard.

To showcase the applicability of our model, we apply it to the problem of secure multiparty computation. We show that payments can be used, together with what is known as a `publicly verifiable covert secure protocol' \cite{covert,pvc}, to yield a secure protocol for secure function evaluation involving rational agents.  We stress that this does not violate the impossibility result of Halpern and Teague for the simple reason that they explicitly assume the utilities are not quasilinear, hence not allowing payments.
\begin{theorem}[Informal]\label{thm:2_informal}
    Any function $f$ can be computed with $\delta$-strong game-theoretic security with rational agents by augmenting an $\varepsilon$-deterrent PVC protocol with a payment scheme where each party pays $O(1+\delta / \varepsilon)$.
\end{theorem}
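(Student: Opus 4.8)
\medskip
\noindent\textbf{Proof strategy.}
The plan is to instantiate the framework of the preceding sections with the extensive-form game induced by running the given $\varepsilon$-deterrent PVC protocol, and then to read off a payment scheme meeting the bound. First I would fix the game: the players are the $n$ parties, a pure strategy is a choice of input together with a next-message function, and an outcome records the transcript, which parties (if any) learned $f$'s output, and -- crucially -- whether the protocol's publicly verifiable cheating certificate was produced and whom it names. Two facts about PVC protocols drive the argument. By $\varepsilon$-deterrence, for every party $i$ and every strategy profile, either $i$'s strategy is \emph{simulatable} -- so, up to a negligible quantity, the joint distribution of transcript and outputs equals the one produced by an honest $i$, and in particular $i$ gains nothing in game-theoretic terms -- or a valid certificate naming $i$ is published with probability at least $\varepsilon$. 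By completeness of the PVC protocol an honest party is never named. Hence the only thing the mechanism must observe is, for each $i$, the bit ``a certificate naming $i$ was published'', which is exactly the kind of noisy partial observation the model is built on: the induced observation map sends the honest state of $i$ to ``no certificate'' with probability $1$ and any effectively-cheating state to ``certificate'' with probability at least $\varepsilon$, so no left-invertibility of the observation matrix is needed here -- separating ``honest'' from ``cheats'' suffices because we only need to implement one target utility profile.

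\noindent I would then take the payment scheme to be the obvious deposit--refund contract: before the protocol each party posts a deposit $d$; if within a fixed challenge window a valid certificate naming party $i$ appears, $i$ forfeits $d$ (sent to an unspendable address, so that no party ever profits from another's punishment and there is no incentive to bait an opponent); otherwise $d$ is returned. It remains to choose $d$. Let $G$ bound the amount by which any single party's utility can exceed its honest-outcome utility over all outcomes; under the paper's normalisation $G=O(1)$, and $G$ absorbs, e.g., the benefit to a party of learning the output while some others do not. Analysing a deviation by party $i$: conditioned on $i$ being named (probability $\ge \varepsilon$) its payoff is at most $G-d$ -- typically strictly less, since the honest parties abort and $i$ forgoes the output -- while conditioned on $i$ not being named the deviation is either simulatable (payoff at most the honest payoff up to a negligible term) or still at most $G$ above honest; so deviating beats honest play by at most $G-\varepsilon d+\mathrm{negl}$. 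Since $\delta$-strong game-theoretic security asks precisely that every deviation be worse than honest play by at least $\delta$, it suffices that $\varepsilon d \ge \delta + G + \mathrm{negl}$, i.e. $d=\Theta\big((\delta+O(1))/\varepsilon\big)$, which is the claimed $O(1+\delta/\varepsilon)$ when the deterrence factor $\varepsilon$ is treated as a constant; pinning down the exact constant is a routine case analysis over the two branches of the covert-security experiment and over the forms a profitable deviation can take. Honest play is simultaneously a best response (each honest party recovers $d$ in full and is never named, and the contract only sharpens the others' incentives), so the honest profile is a $\delta$-strong equilibrium and, being simulatable, inherits correctness and privacy of the computation of $f$ from the PVC protocol. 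Matching tightness is immediate from the earlier lower bound that payments must grow linearly in the intended security level, so $d=\Theta(\delta)$ at fixed $\varepsilon$ is asymptotically optimal.

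\medskip
\noindent\textbf{Expected main obstacle.}
The delicate point is not the arithmetic but the bridge between the cryptographic and the game-theoretic statements: turning ``simulatable'' into ``no utility gain'' requires that the agents' utilities depend only on quantities visible in the transcript/output distribution and are insensitive to negligible statistical distance, and one must ensure that \emph{every} profitable deviation -- including premature aborts that would break fairness by leaking the output to the deviator alone -- is covered by the certificate guarantee (so the reduction should start from a PVC protocol whose $\varepsilon$-deterrence already attributes such aborts, e.g. one with identifiable abort, or argue that the output-delivery round is atomic and public). If ``strong'' is intended in a coalitional sense, the same reasoning needs the underlying detection guarantee to hold against the relevant coalitions; otherwise the bound is for unilateral deviations.
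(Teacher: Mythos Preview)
Your proposal is correct and arrives at the same $O(1+\delta/\varepsilon)$ deposit, but the route is genuinely different from the paper's. The paper treats this section as a showcase for its general machinery: it models the PVC interaction as the game $G_{\texttt{PVC}}$ with alphabet $\Sigma=\{\top,\textsf{abort}_1,\textsf{cheat}_1,\ldots,\textsf{abort}_n,\textsf{cheat}_n\}$, writes out the full emission matrix $\P_{\texttt{PVC}}$, observes that it is \emph{invertible} for $\varepsilon>0$, explicitly computes $\P_{\texttt{PVC}}^{-1}$, chooses a target utility matrix $\E$ that enforces $(\delta+1)$-strong security, and then reads off $\L_{\texttt{PVC}}=(\U-\E)\,\P_{\texttt{PVC}}^{-1}$ entry by entry. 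The resulting scheme does more than punish: it also \emph{compensates} honest parties by $-u^-/\varepsilon$ when someone else is caught, and it penalises aborts separately (by $\delta$). The deposit $\lambda_i^*=(u_i^++\delta)/\varepsilon$ is then simply the largest entry of the $i$th row. By contrast, you bypass the matrix inversion entirely with a burn-only deposit-refund contract and a one-line expected-value comparison; this is essentially the adversarial-level-agreement special case the paper isolates earlier, and you correctly note that left-invertibility is irrelevant since only one target $\E$ needs to be hit. What the paper's approach buys is a concrete illustration that the framework of \cref{deposit_intro} applies end to end (including honest-party compensation and the self-containment condition $\delta\ge -(u^++(n-1)u^-)$); what your approach buys is a shorter, more robust argument that does not depend on the emission matrix being square or invertible. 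One small point to align with the paper: in its model, \textsf{abort} is a distinct, deterministically observable action that produces no cheat certificate, so your certificate-only observation does not penalise it; the paper handles this with a separate $\delta$ penalty on $\textsf{abort}_i$, which you would need to add explicitly (you flag this in your obstacle section but do not fold it into the scheme).
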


\noindent Finally, we prove a lower bound on the size of the largest punishment (equivalently, \emph{deposit}) for all games that are `self-contained'. We show the punishments must be linear in the size of the desired level of security. Note that this matches asymptotically the bound of \cref{thm:2_informal} since $n,s$, and $\varepsilon$ are constant for any fixed PVC protocol.
\begin{theorem}[Informal]\label{thm:3_informal}
    Any self-contained payment scheme that achieves $\delta$-strong game-theoretic security in a game of $n$ players must have a maximum punishment of size $\Omega(1 + \delta \sqrt{n} / s)$, where $s$ is the number of observable outcomes.
\end{theorem}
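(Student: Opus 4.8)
The plan is to recast ``achieves $\delta$-strong game-theoretic security'' as the feasibility of a linear program in the payment matrix, extract a cheap bound that already handles the regime $s = \Omega(\sqrt n)$, and then build a tailored hard family of games that forces the $\sqrt n / s$ refinement in the complementary regime. Concretely, write $\pi\in\R^s$ for the distribution over observable outcomes induced by the prescribed profile, $\pi^{(i)}$ for the distribution induced when player $i$ switches to their most profitable unilateral deviation (others honest), and $g_i\ge 0$ for the intrinsic game-gain of that deviation. Then $\delta$-strong security is exactly the system $\langle p_i,\pi-\pi^{(i)}\rangle\ge g_i+\delta$ for all $i$, and ``self-contained'' contributes the structural constraints that the scheme creates no money, $\sum_i p_i(o)\le 0$ for every outcome $o$, and that the maximum punishment bounds the payments, $\norm{p_i}_\infty\le M$ for all $i$. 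A single deterrence inequality plus H\"older gives $M\ge(g_i+\delta)/\norm{\pi-\pi^{(i)}}_1\ge (g_i+\delta)/2$, so $M=\Omega(1+\delta)$ whenever the game has a deviation with $g_i=\Theta(1)$; since $(1+\delta)/2\ge\tfrac12(1+\delta\sqrt n/s)$ once $s\ge\sqrt n$, this already proves the theorem in that range, and reduces the problem to exhibiting an $n$-player, $s$-outcome instance that forces $M=\Omega((1+\delta)\sqrt n/s)$ when $s=O(\sqrt n)$.

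For the hard instance I would take $n$ players each possessing a unit-gain ``cheating'' action whose \emph{only} observable consequence is a deliberately coarse and noisy $s$-valued tally of how many players deviated. The design goal is that, by a concentration-of-measure argument, under honest play the tally is supported essentially on a window of length $\Theta(\sqrt n)$ partitioned into the $s$ cells, so that a single unilateral deviation can shift only $O(\sqrt n)$ units of ``probability mass density'' across at most $O(s)$ cell boundaries, i.e.\ $\norm{\pi-\pi^{(i)}}_1 = O(s/\sqrt n)$ for every $i$. Plugging this into the single-constraint H\"older bound above yields $\norm{p_i}_\infty\ge (1+\delta)/\norm{\pi-\pi^{(i)}}_1 = \Omega((1+\delta)\sqrt n/s)$ for every $i$, hence $M=\Omega(1+\delta\sqrt n/s)$. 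Here the self-containedness constraints do real work in two ways: they rule out the degenerate money-creating schemes that would otherwise be trivially cheap, and the payment bound $\norm{p_i}_\infty\le M$ (rather than merely a bound on the downside $-p_i$) is precisely what makes the H\"older step give a two-sided conclusion; one also checks, via \cref{lemma:Phi_must_be_left_invertible_to_implement_any_E} applied to the subspace spanned by the deviation directions, that the relevant quantity is the smallest singular value of the inference map restricted there, which the construction pins at $\Theta(s/\sqrt n)$.

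Finally I would verify the family is non-vacuous by exhibiting a matching scheme: a confiscate-and-redistribute rule that zeroes out a cheater's stake exactly when the tally lands in its designated cell, returns deposits otherwise, and redistributes the proceeds uniformly, which achieves $\delta$-strong security with $M=\Theta((1+\delta)\sqrt n/s)$. I expect the main obstacle to be the calibration of the noisy tally so that the ``$\Theta(\sqrt n)$-window split into $s$ cells'' picture is simultaneously (i) realizable as a genuine finite $n$-player game with exactly $s$ observable outcomes, (ii) still intrinsically demanding $\Omega(1+\delta)$ of deterrence, and (iii) consistent with the budget-balance constraints --- the last point being the delicate one, since the $n$ single-deviation signatures $\pi-\pi^{(i)}$ are nearly parallel yet must be pairwise distinct for any self-contained scheme to exist at all, and controlling the interplay between ``nearly parallel'' and ``$\sum_i p_i\le 0$'' is where a Khintchine/variance estimate summed over the $n$ players is needed, both to make the construction legal and to show no cheaper scheme can exist.
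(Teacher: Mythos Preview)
Your approach diverges substantially from the paper's, and in fact targets a different statement. The paper's formal \cref{thm:lower_bound} is a \emph{universal} bound holding for \emph{every} game and information structure, obtained by a short chain of matrix-norm inequalities: from the feasibility system $\AA{t}\L\P\le\AA{t}\U-\e{t}$ one passes to $\norm{\L}_{\max}\ge\norm{\L}_2/\sqrt{n\,s}$ via \cref{eq:max_L2}, and then controls $\norm{\L}_2$ from below using submultiplicativity together with the known row/column structure of $\AA{t}$, $\e{t}$, and $\P$ (each column of $\P$ is a pdf so $\norm{\P}_1=1$; each row of $\AA{t}$ has exactly one $+1$ and one $-1$ so $\norm{\AA{t}}_\infty=2$; $\e{t}$ is the all-$\delta$ vector). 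The $\sqrt n$ factor emerges purely from the norm-equivalence constants in \cref{eq:L2_bounded_by_L1,eq:L2_bounded_by_Linfty}, and no hard instance is ever constructed. Your route instead reads the theorem \emph{existentially} and builds a specific family (a noisy $s$-cell tally of deviators) engineered so that single-player deviations have total-variation signature $O(s/\sqrt n)$, then applies H\"older per player. This buys two things the paper does not: a matching upper bound via your confiscate-and-redistribute scheme, hence tightness on that family, and an operational explanation of $\sqrt n/s$ as statistical indistinguishability rather than an artifact of norm comparison. The cost is that it does not deliver the universal claim --- a game whose $\P$ has well-separated columns escapes your hard instance yet is still covered by the paper's argument. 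Your per-player H\"older bound $M\ge(g_i+\delta)/\norm{\pi-\pi^{(i)}}_1$ is the natural starting point for a universal proof as well, but aggregating across all $n$ players to recover the $\sqrt n$ factor without assuming a specific $\P$ is exactly the step where the paper substitutes matrix-norm machinery for your hard-instance construction.
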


\noindent The paper is organized as follows. We start in \cref{deposit_intro} by defining our model of payment schemes. We show how to implement a payment scheme using a smart contract, and prove that payments can be used to implement any set of utilities if and only if the mechanism can essentially infer all information about what happened. In \cref{complexity}, we consider the computational complexity of finding payment schemes and prove \cref{thm:1_informal}. Next, in \cref{mpc}, we apply the framework to secure MPC and prove \cref{thm:2_informal}. Finally, in \cref{lower_bound}, we show a lower bound on the size of the maximum deposits and prove \cref{thm:3_informal}.

\setcounter{theorem}{0}
\setcounter{conjecture}{0}

\section{Preliminaries and notation}

In this section we briefly state some preliminaries needed for the purpose of self-containment, as well as to establish notation. The set of all real vectors with $n$ elements is given by $\mathbb{R}^n$, and the set of all $m \times n$ matrices is given by $\mathbb{R}^{m \times n}$. We use a boldface font to refer to vectors and matrices, and reserve capital symbols $\mathbf{A}$ for matrices, and lowercase symbols $\mathbf{u}$ for vectors. The $n \times n$ identity matrix is denoted $\mathbf{I}_n$. We may denote by $\mathbf{0}$, resp. $\mathbf{1}$ as either matrices or vectors containing only 0, resp. 1 and trust it is clear from the context what we mean. To emphasize the size, we may write e.g. $\mathbf{0}^n$ as the vector $[0,0,\ldots, 0]^\top \in \mathbb{R}^n$. For a matrix $\mathbf{A}=(a_{ij}) \in \mathbb{R}^{m\times n}$, we denote by $\text{vec}(\mathbf{A})=[a_{11},a_{12},\ldots,a_{1n},a_{21},\ldots,a_{mn}]^\top \in \mathbb{R}^{mn}$ the vectorized version of $\mathbf{A}$, resulting from `flattening' the matrix to turn it into a vector. Note that this is a linear operation. If $\mathbf{a} \in \mathbb{R}^{n}, \mathbf{b} \in \mathbb{R}^{n'}$ are two vectors, we denote by $\mathbf{a} \Vert \mathbf{b} \in \mathbb{R}^{n+n'}$ the `concatenation' of $\mathbf{a}$ and $\mathbf{b}$.

\subsection{Game Theory}

We mostly assume familiarity with game theory and refer to \cite{Osborne1994} for more details. We give a brief recap to establish notation. An \emph{extensive-form game} consists of a rooted tree $T$, the leaves of which are labeled with a utility for each player. We denote by $L \subseteq T$ the set of leaves in $T$, and suppose some arbitrary but fixed order on its elements, $\ell_1, \ell_2, \ldots \ell_m$. We assume the existence of an $n \times m$ matrix $\U = (u_{ij}) \in \mathbb{R}^{n \times m}$, called the utility matrix of $G$, that for each player $P_i$ specifies how much utility $u_{ij}$ they receive when the game terminates in the leaf $\ell_j \in L$. The remaining nodes $T \setminus L$ are partitioned into $n$ sets, one belonging to each player. The game is played, starting at the root, by recursively letting the player who owns the current node choose a child to descend into. We stop when a leaf $\ell_j$ is reached, after which player $P_i$ is given $u_{ij}$ utility. A mapping $s_i$ that dictates the moves a player $P_i$ makes is called a strategy for that player, and is said to be pure if it is deterministic, and mixed otherwise. A set of strategies $s=(s_1,s_2,\ldots, s_n)$, one for each player, is called a strategy profile and defines a distribution on the set of leaves in the game. We overload notation and let $u_i(s)$ denote the expected utility for player $P_i$ when playing the strategy profile $s$. If $C \subseteq \{1,2,\ldots n\}$ is a set of indices of players, a coalition, we denote by $-C$ its complement so that we may write a strategy profile $s$ as $s=(s_C, s_{-C})$. As solution concept, we will use a refinement of regular Nash equilibria that takes into account deviations by more than a single party, see \cite{kresilience} for details on this model. Formally, a strategy profile $s^*$ is said to be a \emph{$t$-robust (Nash) equilibrium} if for every strategy $s_C$ with $|C|\leq t$, and every $i \in C$, it holds that: 
$$u_i(s^*) \geq u_i(s_C, s_{-C}^*)$$
A subgame of $G$ is a subtree $G' \subseteq G$ such that whenever $u \in G'$ and $v \in G$ is a child of $u$, then $v \in G'$. A strategy profile that is a $t$-robust equilibrum for every subgame of $G$ is said to be a \emph{$t$-robust subgame perfect equilibrium} (SPE). These definitions suffice for so-called games of perfect information, where at each step, a player knows the actions taken by previous players, though, more generally, we may consider partitioning each set of nodes belonging to a player into \emph{information sets}, the elements of which are sets of nodes that the player cannot tell apart. A game of perfect information is a special case where all information sets are singletons. 

\section{Payment Schemes}
\label{deposit_intro}
In this section, we present our model of games with payment schemes and show when they can be used to ensure it is rational to play an intended strategy. We consider games of perfect information as every such game allows for backward induction to determine an SPE in linear time in the size of the game tree. In general, we should not hope to efficiently determine an optimal payment scheme for games of imperfect information, as it is well-known that finding an equilibrium in these games is \textsf{PPAD}-complete as shown by Daskalakis, Goldberg and Papadimitriou \cite{ppad}. 

We consider a set of $n$ parties $P_1, P_2, \ldots, P_n$ playing a fixed finite extensive-form game $G$ of perfect information. The parties are assumed to be risk-neutral such that they aim to maximize their expected utility. We also assume the parties have quasilinear utilities such that we can use payments to change their incentives. We take as input a unique pure strategy profile $s^*$ that we want the parties to play that we call the \emph{honest strategy profile} for lack of a better term. Note that $s^*$ is required to be pure, since it is impossible to determine (without multiple samples) if a player played a mixed strategy. This has the effect that $s^*$ defines, at each branch in the game, a unique `honest move' that the corresponding party must play. Our goal is to construct a procedure $\Gamma$ that takes as input a game $G$ in a black-box way and produces an equivalent game $\Gamma(G)$ that implements a different utility matrix $\E$ such that $s^*$ is an equilibrium.

\paragraph{Information Structures}  In order to construct the procedure $\Gamma$, we need to be able to infer \emph{something} about what happened during the execution of the game, as otherwise we are simply `shifting' the utilities of the game, not changing the structure of its equilibria. We call a mechanism that enables inferring information from a game an \emph{information structure}. We assume playing the game emits a symbol from a fixed finite alphabet $\Sigma$ of possible outcomes that can be observed by an outside observer. This alphabet serves as a proxy for how the parties acted in the execution of the game. We associate with each leaf of the game a distribution on $\Sigma$. When the game terminates, we sample a symbol according to the distribution and output that symbol. 

\begin{definition}
	An \emph{information structure} for $G$ is a pair $\langle \Sigma, \P \rangle$ where $\Sigma$ is a finite alphabet of symbols with some arbitrary but fixed order on its symbols, $\sigma_1, \sigma_2, \ldots \sigma_s$ for $s=|\Sigma|$, and where $\P=(\phi_{kj}) \in \R^{s \times m}$ is a matrix of emissions probabilities such that every column of $\P$ is a pdf on the symbols of $\Sigma$. $\hfill\diamond$
\end{definition}

\noindent Given a finite game with an information structure, a \emph{payment scheme} $\Gamma$ is a mechanism that can be used to change the utilities of the game. At the end of the game, the payment scheme rewards or punishes the parties based on what was emitted by the information structure. We assume \emph{quasilinearity}, that is, the utilities of the game are given in the same unit as some arbitrarily divisible currency which the payment scheme is able to process. A party $P_i$ is indifferent to obtaining an outcome that gives them $u_{ij}$ utility and receiving $u_{ij}$ money. In other words, we make the implicit assumption that `everything has a price' and intentionally exclude games that model interactions with events that are not interchangeable with money. This circumvents the impossibility result of Halpern and Teague who implicitly assume a fixed total order on the set of possible outcomes. By contrast, quasilinearity allows the payment schemes to alter the order by punishing or rewarding parties with money. 

\begin{definition}\label{def:deposit_scheme}
   A \emph{payment scheme} for $\langle G, \mathcal{I} \rangle$ is a matrix $\L = \{\lambda_{ik}\} \in \R^{n \times s}$, where $\mathcal{I} = \langle \Sigma, \P \rangle$ is an information structure for $G$, and $\lambda_{ik}$ is the utility lost by $P_i$ when observing the symbol $\sigma_k \in \Sigma$ $\hfill\diamond$
\end{definition}

\noindent In our definition, $\L$ is a matrix that explicitly defines how much utility $\lambda_{ik}$ party $P_i$ loses when the payment scheme observes the symbol $\sigma_k \in \Sigma$. When the game is played reaching the leaf $\ell_j$, by quasilinearity the expected utility of party $P_i$ is the utility they would have received in a normal execution, minus their expected loss from engaging with the payment scheme:

\begin{equation}
    \label{eq:utilities}
    \mathbb{E}[\text{$P_i$ utility in leaf $\ell_j$}] = u_{ij} - \sum_{k=1}^s \lambda_{ik} \, \phi_{kj} = [\U - \L \P]_{ij}
\end{equation}

\noindent Correspondingly, the game $\Gamma^\L(G)$ is said to \emph{implement the utility matrix $\E$} if $\E = \U - \L \P$. 

For the remainder of this paper, we study when and how $\L$ can be instantiated to ensure $\Gamma^\L(G)$ implements some $\E$ with desirable properties.

\noindent Note that $\L$ is allowed to contain negative entries which means parties are \emph{compensated}, i.e. receive back more utility from the payment scheme than they initially deposited. Of course, this necessitates that some other party loses their deposit. In general, we may want the payment scheme to `break even', in the sense that all column sums of $\L$ are zero. However, this severely restricts the class of utility matrices that can be implemented.

\begin{lemma}\label{lemma:zero_inflation_U-E_column_sum_zero}
    To implement $\E$ with zero inflation, each column of $\U-\E$ must sum to zero.
\end{lemma}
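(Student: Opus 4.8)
The plan is to read the claim straight off the defining identity: by construction $\E = \U - \L\P$, i.e.\ $\U - \E = \L\P$, so I would simply left-multiply by the all-ones row vector. The zero-inflation hypothesis says every column of $\L \in \R^{n\times s}$ sums to zero, which in matrix form is $\mathbf{1}_n^\top \L = \mathbf{0}^\top \in \R^s$, where $\mathbf{1}_n$ denotes the all-ones vector of length $n$. The desired conclusion, that every column of $\U - \E \in \R^{n\times m}$ sums to zero, is likewise the single matrix equation $\mathbf{1}_n^\top(\U - \E) = \mathbf{0}^\top \in \R^m$.

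Hence the only computation to carry out is
\[
\mathbf{1}_n^\top(\U - \E) \;=\; \mathbf{1}_n^\top \L\P \;=\; \bigl(\mathbf{1}_n^\top \L\bigr)\P \;=\; \mathbf{0}^\top \P \;=\; \mathbf{0}^\top ,
\]
invoking associativity of matrix multiplication and the zero-inflation hypothesis in the middle step. Equivalently, and perhaps more transparently, for a fixed leaf $\ell_j$ with emission column $\mathbf{p}_j$ one has $\sum_i [\U-\E]_{ij} = \mathbf{1}_n^\top \L \mathbf{p}_j = (\mathbf{1}_n^\top \L)\mathbf{p}_j = 0$ by the same rearrangement. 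Economically this is just the observation that the total expected amount the scheme collects from the players at any leaf equals the amount it redistributes, which is zero precisely because the scheme breaks even.

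There is no real obstacle here; the two points worth a sentence of care are (i) fixing the orientation convention, so that a ``column sum'' of the $n\times m$ matrix $\U-\E$ means the sum over the $n$ players at a fixed leaf, consistent with the ``column sum'' convention applied to $\L$ over the $s$ symbols, and (ii) noting that the pdf property of the columns of $\P$ (nonnegativity and summing to one) is not used in this direction — it only becomes relevant when one instead wants to realize a prescribed $\E$, rather than deduce a constraint it must satisfy.
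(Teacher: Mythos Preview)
Your proof is correct and rests on the same core identity as the paper's: from $\L\P=\U-\E$ and the zero-inflation hypothesis $\mathbf{1}^\top\L=\mathbf{0}$, left-multiplication by $\mathbf{1}^\top$ immediately gives $\mathbf{1}^\top(\U-\E)=\mathbf{0}$. The paper takes a detour, writing a general solution as $\L=\X_0+\K$ with $\K$ in the cokernel of $\P$, deducing $\mathbf{1}^\top\K=-\mathbf{1}^\top\X_0$, and then right-multiplying by $\P$ to collapse the $\K$-term; but this decomposition is superfluous for the necessary-condition direction and your one-line computation is strictly cleaner. The cokernel parametrisation would only earn its keep if one were also arguing sufficiency (characterising exactly when a zero-inflation $\L$ exists), which the lemma does not claim.
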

\begin{proof}
    First of all, note that any $\L$ that implements $\E$ must satisfy $\L\P=\U-\E$. Any such $\L$ can be written as $\L = \X_0 + \K$ where $\X_0$ is a fixed solution and $\K$ is any element in the cokernel of $\P$, i.e. $\K\P = \mathbf{0}$. In order for $\L$ to have zero inflation, it must hold that
    $\mathbf{1}^\T(\X_0 + \K) = \mathbf{0}$,
    which implies that $\mathbf{1}^\T \K = -\mathbf{1}^\T \X_0$. Now assume $\L$ has zero inflation, then we can multiply by $\P$ from the right to yield
    $\mathbf{1}^\T \K \P = -\mathbf{1}^\T \X_0 \P$.
    But $\K$ is an element of the cokernel of $\P$, and $\X_0$ is a solution to the equation, so it must hold that
    $\mathbf{0} = -\mathbf{1}^\T(\U-\E)$
    which implies the columns of $\U-\E$  sum to zero.\qed
\end{proof}

\subsection{Trivial Information Structures} One might be tempted to simply choose a utility matrix $\E$ with some properties that we like and solve for $\L$ to yield a protocol with those properties. Unfortunately, as we will show, this is only possible for information structures that are `trivial' in the sense that they leak all essentially all information about what happened.

\begin{lemma}\label{lemma:Phi_must_be_left_invertible_to_implement_any_E}
	Let $\U \in \R^{n\times m}, \P \in \R^{s \times m}$ be fixed matrices, and let $\Sigma$ be a fixed alphabet of size $|\Sigma|=s$. Then there exists a $\L_\E$ for each $\E \in \R^{n\times m}$ such that $\L_\E$ implements $\E$ if and only if $\P$ is left-invertible.
\end{lemma}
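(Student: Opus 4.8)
The plan is to prove both directions by working with the linear equation $\mathbf{\Lambda}\mathbf{\Phi} = \mathbf{U}-\mathbf{E}$, which (by the remark preceding the lemma and by \eqref{eq:utilities}) is exactly the condition that $\mathbf{\Lambda}$ implements $\mathbf{E}$.

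For the ``if'' direction, suppose $\mathbf{\Phi}$ is left-invertible, so there is a matrix $\mathbf{\Phi}^{+} \in \R^{m \times s}$ with $\mathbf{\Phi}^{+}\mathbf{\Phi} = \mathbf{I}_m$. Given any target $\mathbf{E} \in \R^{n\times m}$, I would simply set $\mathbf{\Lambda}_{\mathbf{E}} := (\mathbf{U}-\mathbf{E})\mathbf{\Phi}^{+}$. Then $\mathbf{\Lambda}_{\mathbf{E}}\mathbf{\Phi} = (\mathbf{U}-\mathbf{E})\mathbf{\Phi}^{+}\mathbf{\Phi} = \mathbf{U}-\mathbf{E}$, so by \eqref{eq:utilities} this $\mathbf{\Lambda}_{\mathbf{E}}$ implements $\mathbf{E}$. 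This direction is essentially immediate.

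For the ``only if'' direction, I would argue contrapositively: if $\mathbf{\Phi}$ is not left-invertible then some $\mathbf{E}$ cannot be implemented. Left-invertibility of $\mathbf{\Phi} \in \R^{s\times m}$ is equivalent to $\mathbf{\Phi}$ having full column rank $m$ (equivalently, trivial kernel); so if $\mathbf{\Phi}$ is not left-invertible, there is a nonzero vector $\mathbf{v} \in \R^{m}$ with $\mathbf{\Phi}\mathbf{v} = \mathbf{0}$. The key observation is that for any $\mathbf{\Lambda}$, the product $\mathbf{\Lambda}\mathbf{\Phi}$ always annihilates $\mathbf{v}$ on the right: $(\mathbf{\Lambda}\mathbf{\Phi})\mathbf{v} = \mathbf{\Lambda}(\mathbf{\Phi}\mathbf{v}) = \mathbf{0}$. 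Hence the set of achievable matrices $\mathbf{U}-\mathbf{E}$ is contained in the proper subspace $\{\mathbf{M} \in \R^{n\times m} : \mathbf{M}\mathbf{v} = \mathbf{0}\}$. It therefore suffices to exhibit one $\mathbf{E}$ for which $\mathbf{U}-\mathbf{E}$ violates this: take any $\mathbf{w} \in \R^{n}$ with $\mathbf{w} \neq \mathbf{0}$ (e.g. $\mathbf{w} = \mathbf{e}_1$) and set $\mathbf{E} := \mathbf{U} - \mathbf{w}\mathbf{v}^\top$. Then $(\mathbf{U}-\mathbf{E})\mathbf{v} = \mathbf{w}\mathbf{v}^\top\mathbf{v} = \|\mathbf{v}\|^2\,\mathbf{w} \neq \mathbf{0}$ since $\mathbf{v}\neq\mathbf{0}$, so no $\mathbf{\Lambda}$ can implement this $\mathbf{E}$, completing the contrapositive.

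I do not expect a serious obstacle here, as both directions reduce to elementary linear algebra once one identifies ``$\mathbf{\Lambda}$ implements $\mathbf{E}$'' with ``$\mathbf{\Lambda}\mathbf{\Phi} = \mathbf{U}-\mathbf{E}$''. The only point requiring a little care is the standard equivalence ``left-invertible $\iff$ full column rank $\iff$ trivial kernel'' for the rectangular real matrix $\mathbf{\Phi}$, and the fact that an explicit left inverse exists in that case (e.g. the Moore–Penrose pseudoinverse $(\mathbf{\Phi}^\top\mathbf{\Phi})^{-1}\mathbf{\Phi}^\top$ when the columns are independent); I would state this explicitly to keep the argument self-contained.
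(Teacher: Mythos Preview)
Your proposal is correct. The ``if'' direction is identical to the paper's: both set $\mathbf{\Lambda}_{\mathbf{E}} := (\mathbf{U}-\mathbf{E})\mathbf{\Phi}^{+}$ for a left inverse $\mathbf{\Phi}^{+}$ and verify $\mathbf{\Lambda}_{\mathbf{E}}\mathbf{\Phi} = \mathbf{U}-\mathbf{E}$.

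The ``only if'' direction genuinely differs. The paper argues directly: assuming every $\mathbf{E}$ is implementable, it first argues $s \geq m$, then picks $\mathbf{E}$ so that $\mathbf{U}-\mathbf{E}$ is itself left-invertible with left inverse $\mathbf{F}$, and concludes that $\mathbf{F}\mathbf{\Lambda}_{\mathbf{E}}$ is a left inverse of $\mathbf{\Phi}$. Your route is the contrapositive via the kernel: if $\mathbf{\Phi}$ is not left-invertible, take $\mathbf{v}\neq\mathbf{0}$ with $\mathbf{\Phi}\mathbf{v}=\mathbf{0}$, observe that every $\mathbf{\Lambda}\mathbf{\Phi}$ annihilates $\mathbf{v}$, and exhibit $\mathbf{E}=\mathbf{U}-\mathbf{w}\mathbf{v}^{\top}$ with $(\mathbf{U}-\mathbf{E})\mathbf{v}\neq\mathbf{0}$. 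Your argument is slightly more robust: the paper's step ``choose $\mathbf{E}$ such that $\mathbf{U}-\mathbf{E}\in\R^{n\times m}$ is left-invertible'' tacitly needs $n\geq m$, whereas your kernel construction works for all dimensions $n,m,s$ and makes the obstruction (a nontrivial kernel vector) explicit. The paper's approach, on the other hand, has the virtue of producing the left inverse of $\mathbf{\Phi}$ constructively from the hypothesis.
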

\begin{proof}
	We prove each claim separately:
	\begin{itemize}
		\item[$\Leftarrow$]  If $\P$ is left-invertible, then for any fixed $\E$ we can let $\L_\E := (\U-\E)\,\P^{-1}$ where $\P^{-1}$ is a left-inverse of $\P$. It follows that
		$\U-\L\P = \U - (\U-\E)\P^{-1}\P = \U - \U + \E = \E$,
		which means that $\L_\E$ implements $\E$, as we wanted to show.\\
		\item[$\Rightarrow$] Suppose there is such a $\L_\E$ for each $\E$. This means that we can always find $\L_\E$ that solves $\U-\E = \L_\E\P$. Assume for the sake of contradiction that there are fewer symbols than leaves. This means there must be a leaf for which the deposits is a fixed linear combination of the deposits of the other leaves which means there must be an $\E$ that we cannot implement. But this is a contradiction so we assume there are at least as many symbols as leaves. This means we can choose $\E$ such that $\U-\E$ is left-invertible with left-inverse $\mathbf{F}\in\R^{m\times n}$, which means that
		$\mathbf{F}\L_\E \P = \mathbf{I}_m$. But this means that $\mathbf{F}\L_\E$ is the left-inverse of $\P$, a contradiction.\qed
	\end{itemize}
\end{proof}

\noindent In particular, we can only implement any $\E$ we want if there are at least as many symbols as leaves in the game tree, and that these symbols are not duplicates, i.e. the distributions of symbols across the leaves are linearly independent. Since we are considering pdfs which are normalized, linear independence means the pdfs are pairwise distinct. This means we can only `do what we want' if the smart contract is basically able to infer completely what happened in the execution of the game. Of course, this makes the problem trivial as mentioned, as intuitively, we can simply keep a deposit for all players who deviated from the intended strategy.

\subsection{A Special Case: Adversarial Level Agreements} We now show how the model of `adversarial level agreements' (ALAs) by George and Kamara \cite{alas} can be recovered as a special case of our model. An ALA for a game with $n$ players consists of 1) a description of the intended strategy for each player, and 2) a vector of damages $\mathbf{d} \in \mathbb{R}^n$ that specifies how much utility $\mathbf{d}_i$ party $P_i$ should lose when found to deviate from the intended strategy. Their model does not explicitly consider deviations by more than a single party, so we can state this as an information structure with the following alphabet:
$$
	\Sigma = \{\top, \bot_1, \bot_2, \ldots, \bot_n\}
$$
Here, $\top$ means all parties were honest, and $\bot_i$ means $P_i$ deviated. The emission matrix $\P$ depends on the specific application. An ALA then corresponds to a payment scheme of the following form.
$$
	\L = \begin{pmatrix}
		0 & \mathbf{d}_1 & 0 & \cdots & 0\\
		0 & 0 & \mathbf{d}_2 & \cdots & 0\\
		\vdots & \vdots & \vdots & \ddots & \vdots\\
		0 & 0 & 0 & \cdots & \mathbf{d}_n
	\end{pmatrix}
$$
Note that we can easily generalize this to deviations by any $t\leq n$ parties by including more symbols, e.g. $\bot_{12}$ or $\bot_{456}$. 

\subsection{Payment Schemes as Smart Contracts} 

We now explain informally how a payment scheme can be deployed in practice as a smart contract running on a blockchain, see \cref{fig:deposit_scheme} for a depiction. Note that there are many subtleties in getting this rigorous, see e.g. \cite{fbc,kachina} for more formal cryptographic modeling. At a high level, we want to ensure party $P_i$ loses $\lambda_{ik}$ utility when the symbol $\sigma_k$ is observed. We can implement this by defining $\lambda_i^* := \max_{k \in \{1,2,\ldots s\}} \lambda_{ik}$ and letting each party $P_i$ make a deposit of size $\lambda_i^*$ to a smart contract before playing the game. Afterwards, the parties are repaid appropriately by the payment scheme to ensure their utility is as dictated by $\L$. Suppose we fix some payment scheme $\Gamma$, then the \emph{augmented game $\Gamma(G)$} is played as follows:
\begin{enumerate}
	\item Each $P_i$ makes a deposit of $\lambda_i^*$ to the payment scheme.
	\item The game $G$ is played, reaching a leaf $\ell_j$.
	\item A symbol $\sigma_k$ is sampled from $\Sigma$ according to the pdf in the $j^\text{th}$ column of $\P$.
	\item Each party $P_i$ is repaid $\lambda_i^* - \lambda_{ik}$.
\end{enumerate}

\begin{figure}
    \centering
    \scalebox{0.725}{
        \begin{tikzpicture}
        \node at (-5,1) {(1)};
        \node[draw=black,circle] (P1) at (-4.95,-2) {$P_1$};
        \node[draw=black,circle] (P2) at (-3.25,-2) {$P_2$};
        \node[] (dots) at (-2.25,-2) {$\cdots$};
        \node[draw=black,circle] (Pn) at (-1.25,-2) {$P_n$};
        \node[draw=black] (pi) at (-2.25,1) {$G$};
        \node (gear) at (0,0) {$\Gamma$};
    	\draw \gear{10}{0.32}{0.4}{8}{2};
        \draw[->] (Pn) -| node[shift={(-0.3,0.3)}] {$\lambda_n^*$} (-0.2,-0.35);
        \draw[-] (P2) |- node[shift={(0.75,0.3)}] {$\lambda_2^*$} (0,-2.75);
        \draw[->] (0,-2.75) -- (0,-0.385);
        \draw[-] (P1) |- node[shift={(0.75,0.3)}] {$\lambda_1^*$} (0.2,-3);
        \draw[->] (0.2,-3) -- (0.2,-0.35);
        \end{tikzpicture}
    }\hspace{1cm}
    \scalebox{0.725}{
        \begin{tikzpicture}
        \node at (-5,1) {(2)};
        \node[draw=black,circle] (P1) at (-4.95,-2) {$P_1$};
        \node[draw=black,circle] (P2) at (-3.25,-2) {$P_2$};
        \node[] (dots) at (-2.25,-2) {$\cdots$};
        \node[draw=black,circle] (Pn) at (-1.25,-2) {$P_n$};
        \node[draw=black] (pi) at (-2.25,1) {$G$};
        \node (gear) at (0,0) {$\Gamma$};
    	\draw \gear{10}{0.32}{0.4}{8}{2};
    	\draw[dashed] (P1) -- (pi);
    	\draw[dashed] (P2) -- (pi);
    	\draw[dashed] (Pn) -- (pi);
        \node[draw=white] at (0,-2.85) {};
        \draw[->] (pi) -|node[shift={(-1.25,0.25)}]{$\sigma_k$} (0,0.425);
        \draw [rounded corners=5,dashed] (-5.6,-2.5) rectangle ++(5,1);
        \end{tikzpicture}
    }\\[0.55cm]
    \scalebox{0.725}{
        \begin{tikzpicture}
        \node at (-6,1) {(3)};
        \node[draw=black,circle] (P1) at (-4.95,-2) {$P_1$};
        \node[draw=black,circle] (P2) at (-3.25,-2) {$P_2$};
        \node[] (dots) at (-2.25,-2) {$\cdots$};
        \node[draw=black,circle] (Pn) at (-1.25,-2) {$P_n$};
        \node[draw=black] (pi) at (-2.25,1) {$G$};
        \node (gear) at (0,0) {$\Gamma$};
    	\draw \gear{10}{0.32}{0.4}{8}{2};
        \draw[->] (-0.38,0.15) -|node[fill=white,shift={(-0.735,-1)}]{$\lambda_1^*-\lambda_{1k}$} (P1);
        \draw[->] (-0.4,0) -|node[fill=white,shift={(-0.735,-1)}]{$\lambda_2^*-\lambda_{2k}$} (P2);
        \draw[->] (-0.38,-0.15) -|node[fill=white,shift={(-0.76,-1)}]{$\lambda_n^*-\lambda_{nk}$} (Pn);
        \end{tikzpicture}
    }
    \caption{Illustration of how the payment scheme $\Gamma$ augments a game $G$. First, in (1), parties make a deposit to $\Gamma$. Then, in (2), the game is run as usual, and a symbol $\sigma_k \in \Sigma$ is emitted to $\Gamma$. Finally, in (3), $\Gamma$ repays the parties based on the symbol it received from $G$.}
    \label{fig:deposit_scheme}
\end{figure}
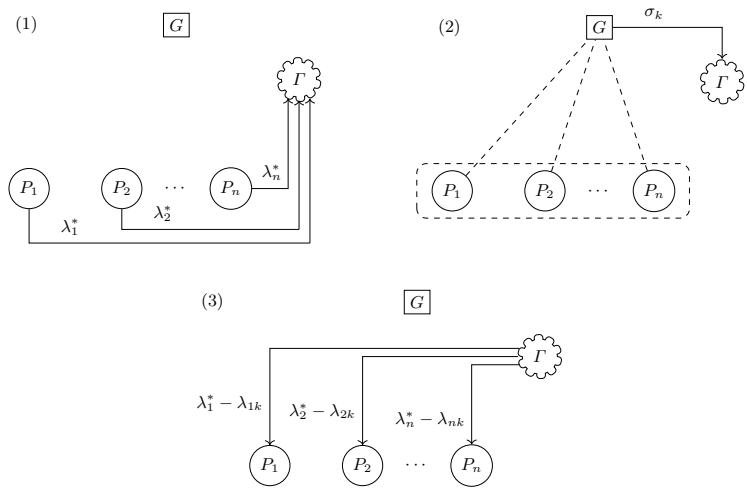

This can be implemented in a fairly straightforward manner using a scripting language and deployed as a smart contract running on a blockchain, assuming access to some information structure with known bounds on the emission probabilities. We stress that this is only one possible implementation of a payment scheme suitable in any scenario, even over the internet when parties are anonymous. The important thing is that party $P_i$ loses $\lambda_{ik}$ utility when symbol $\sigma_k$ is observed. When parties are not anonymous and can be held accountable, the payment scheme can be used in an optimistic manner as was argued by George and Kamara.

\subsection{Game-Theoretic Security}
We now define what it means for a game to be secure in a game-theoretic sense. Intuitively, security should mean the honest strategy profile is an equilibrium, though this is likely not sufficient for some applications. The fact that the honest strategy profile is an equilibrium does not mean it is the only equilibrium. Namely, there might be several dishonest strategy profiles with the same properties, and there is no compelling argument for why parties should opt to be honest in the face of ambiguity. In fact, there might be reasons for being dishonest that are not captured by the utilities of the game, say for spite or for revenge. To remedy this, we want to quantify how much utility parties lose by deviating from the honest strategy profile, in effect measuring the cost of dishonesty. We introduce a parameter $\delta$ such that being dishonest results in the deviating parties losing at least $\delta$ utility. A game with this property is considered secure against $\delta$-deviating rational parties. We give the definition by Schwartzbach \cite{schwartzbach} that generalizes $t$-robust subgame perfect equilibria for finite games of perfect information.

Let $G$ be a fixed finite game with $n$ players and $m$ leaves, and let $\U \in \mathbb{R}^{n \times m}$ be the corresponding utility matrix, and let $\langle \Sigma, \P \rangle$ be some fixed information structure on $G$. We say a utility vector $\mathbf{u}$ is \emph{$C$-inducible in $G$} for a coalition $C$ if there is a strategy $s_C$ such that playing $s=(s_C, s_{-C}^*)$ terminates in a leaf $\ell$ labelled by $\mathbf{u}$ with non-zero probability. 

\begin{definition}
Let $G$ be a game, and $s^*$ an intended strategy profile. We say $G$ has \emph{$\delta$-strong $t$-robust game-theoretic security} if for every subgame of $G$, and every $C$-inducible vector $\mathbf{u}$ in that subgame with $|C| \leq t$, and every $i \in C$, it holds that:
\begin{equation}
	u_i(s^*) \geq \mathbf{u}_i + \delta
\end{equation}
In other words, every coalition of $\leq t$ parties that deviates from $s^*$ at any point in the game should lose at least $\delta$ utility for each deviating party. We note that for finite games of perfect information, $t$-robust subgame perfect equilibria is retained as a special case of this definition by letting $\delta = 0$. 
\end{definition}

\subsection{Example: Decentralized Commerce}

In this section, we demonstrate the applicability of our model by considering the problem of \emph{decentralized commerce} (the `eBay problem'). Here, a seller $S$ wants to sell an item \emph{it} over the internet to a buyer $B$ for $x$ money. To make the problem non-trivial, we assume \emph{it} is physical such that the protocol cannot be entirely implemented using cryptography (see e.g. \cite{optimistic_fair_exchange,multiple_arbiter_fair_exchange,usable_optimistic_fair_exchange,fairswap,dual_deposit} for solutions that work with digital goods). We assume \emph{it} has a value of $y$ to the buyer, and a value of $x'$ to the buyer. To make the problem feasible, we assume that $y > x > x' > 0$. We consider a simple game where $S$ first decides whether to send \emph{it} to $B$, after which $B$ decides whether or not to pay $S$. The resulting extensive-form game is depicted in \cref{fig:commerce}.

In this simple game, the trade will never be completed, as it is evidently rational for the buyer to always reject delivery of the item; consequently, it is rational for the seller not to send the item. This seems to contradict empirical data, as variants of this game are played successfully all the time. The reason for this is that, in practice, buyer and seller are not anonymous and can be held accountable for fraud, and potentially subject to legal repercussions. Also, such trades are typically processed by a middleman (such as eBay/Amazon/Alibaba) that may offer some insurance for either buyer or seller. Regardless, our goal is to augment the game with a payment scheme to avoid having to trust a middleman and enable fully decentralized commerce.

\begin{figure}
	\centering
	\begin{tikzpicture}
	[level 1/.style={sibling distance=35mm},
	level 2/.style={sibling distance=20mm}, 
	level distance=1.5cm,align=center,
    every node/.style={thin},
    emph/.style={edge from parent/.style={ very thick,draw}},
    norm/.style={edge from parent/.style={solid,black,thin,draw}}]
	\node[draw,circle,norm] {$S$}
	child[emph] {
		node[draw,circle] {$B$}
		child[norm] { 
		    node {$\begin{array}{l} B:\,-x\\S:\,\phantom{-}x\end{array}$} 
		    edge from parent node[anchor=south east] {accept}}
		child { 
		    node {$\begin{array}{l}B:\,0\\S:\,0\end{array}$} 
    		edge from parent node[anchor=south west] {\bf reject}}
		edge from parent node[anchor=south east] {\bf not send}
	}
	child[norm] {
		node[draw,circle] {$B$}
		child[emph] { 
		    node {$\begin{array}{l} B:\,\phantom{-}y\phantom{'}\\S:\,-x' \end{array}$} 
		    edge from parent node[anchor=south east] {\bf reject}}
		child { 
		    node {$\begin{array}{l} B:\,y-x\phantom{'}\\S:\,x-x'\end{array}$}
		    edge from parent node[anchor=south west] {accept} }
		edge from parent node[anchor=south west] {send}
	};
	\end{tikzpicture}
	\caption{Extensive-form representation of the decentralized commerce game. The dominating paths are shown in bold. We observe that the dominating strategy is for the seller not to send the item, and for the buyer to withhold their payment regardless of whether they received the item.}
	\label{fig:commerce}
\end{figure}
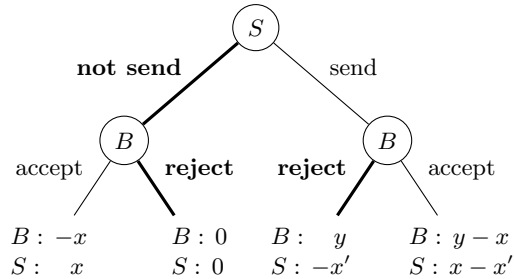

To do so, let us first assume the parties use a smart contract to process the trade, and let us assume we have \emph{some} means of probabilistically inferring whether the item was actually shipped. In the blockchain literature, such a mechanism is called a \emph{blockchain oracle} \cite{blockchain_oracle}. In particular, we assume the oracle can recover the ground truth with probability $1-\varepsilon$ for some constant $\varepsilon<\frac12$. One potential blockchain oracle is the jury-based system Kleros \cite{kleros}, that was shown in \cite{adjudication} to yield $\varepsilon<\frac12$ under reasonable assumptions on the jurors. To proceed, we need to define an information structure on the game. We first define an alphabet of outcomes as follows.
$$
    \Sigma = \{\top, \bot_B, \bot_S\}
$$
Here $\top$ is a symbol emitted if the buyer accepts the trade, and $\bot_B, \bot_S$ are outcomes of the oracle if it is invoked, where $\bot_B$ (resp. $\bot_S$) means `the buyer (resp. the seller) was dishonest'. We then define an emissions matrix on the game as follows.
$$
    \P =    \begin{pmatrix}
                1 & 0 & 0 & 1\\
                0 & 1-\varepsilon & \varepsilon & 0\\
                0 & \varepsilon & 1-\varepsilon & 0
            \end{pmatrix}
$$
Now, let us assume we want to instantiate payments to ensure the game has $x$-strong game-theoretic security. We proceed using backward induction in \cref{fig:commerce}, letting the corresponding utilities have a difference of $\geq x$. This is e.g. achieved by defining the following `desired' utility matrix $\mathbf{E}$. 
$$
    \mathbf{E} =    \begin{pmatrix}
                        -x & 0 & y-2x & y-x\\
                        x  & -x' & -x' & x-x'
                    \end{pmatrix}
$$
In order to implement $\mathbf{E}$, the payment scheme $\L$ must satisfy \cref{eq:utilities} as follows.
\begin{align*}
    \L \P &= \U - \mathbf{E} \\&\Longleftrightarrow 
            \begin{pmatrix}
                \lambda_{B\top} & (1-\varepsilon)\lambda_{B\bot_B} + \varepsilon \lambda_{B\bot_S} & \varepsilon\lambda_{B\bot_B} + (1-\varepsilon) \lambda_{B\bot_S} & \lambda_{B\top}\\
                \lambda_{S\top} & (1-\varepsilon)\lambda_{S\bot_B} + \varepsilon \lambda_{S\bot_S} & \varepsilon\lambda_{S\bot_B} + (1-\varepsilon) \lambda_{S\bot_S} & \lambda_{S\top}
            \end{pmatrix}
          = \begin{pmatrix}
                0 & 0 & 2x & 0\\
                0 & x' & 0 & 0
            \end{pmatrix}
\end{align*}
This immediately gives $\lambda_{B\top} = \lambda_{S\top} = 0$, while the remaining payments are given by four equations with four unknown and can be solved using Gaussian elimination to yield the following payment scheme.
$$
    \L = 
    \begin{pmatrix}
        0 & -\frac{2 \varepsilon}{1-2\varepsilon}x & \phantom{-}\frac{2(1-\varepsilon)}{1-2\varepsilon}x\\
        0 & \phantom{-}\frac{1-\varepsilon}{1-2\varepsilon} x' & -\frac{\varepsilon}{1-2\varepsilon} x'
    \end{pmatrix}
$$
In other words, the buyer must make a deposit of size $\frac{2(1-\varepsilon)}{1-2\varepsilon} x$ to the smart contract, while the seller must make a deposit of size $\frac{1-\varepsilon}{1-2\varepsilon} x'$. To make this more concrete, suppose we have $x = 100\$, x' = 50\$$, and $\varepsilon=0.1$. Then the buyer must make a deposit of size $\lambda_B^* = 225\$$, while the seller must make a deposit of size $\lambda_S^* = 225/4 \$ \approx 57 \$$. The large difference in the deposits reflects the fact that the protocol is `biased' in favor of the buyer. In practice, while $x$ is known to the mechanism, $x'$ is usually not. Instead, we can use a variation of this payment scheme proposed by Schwartzbach \cite{schwartzbach}. Here, both buyer and seller submit a deposit of size $\lambda_B^*=\lambda_S^* = x$, and the resulting contract is shown to have $(1-2\varepsilon)x$-strong game-theoretic security. Their contract is also \emph{optimistic} in the sense that deposits are only used when the buyer disputes delivery of the item.

\section{Computational Complexity}
\label{complexity}
In this section, we analyze the computational complexity of finding payment schemes in arbitrary games. For games of perfect information, we observe the problem is equivalent to linear programming (denoted \lp) under logspace-reductions, thus showing the problem is complete for \textsf{P}. 

More formally, we consider the following optimization problem.

\prob{$\textsf{PaymentScheme}_t^\delta$}{Finite game $G$ with utility matrix $\U \in \mathbb{R}^{n \times m}$ and intended strategy profile $s^*$; finite alphabet $\Sigma$ with $s=|\Sigma|$, emission matrix $\P \in \mathbb{R}^{s \times m}$, and cost vector $\mathbf{c} \in (\mathbb{R}\cup \{\infty\})^{ns}$.}{Self-contained payment scheme $\L \in \mathbb{R}^{n\times s}$ s.t. $\Gamma^\L(G)$ has $\delta$-strong, $t$-robust game-theoretic security, for which $\mathbf{c}^\top \text{vec}(\L)$ is minimized; or $\bot$ if no such payment scheme exists.}
Here, $\infty$ is a formal symbol in the cost function that `forces' the corresponding payment to equal zero. It does not contribute to the actual cost function. This can e.g. be used to implement \emph{honest invariance}, to ensure the utility vector for the intended strategy remains unchanged.  We allow this modeling to simplify our reductions, though we can make do without this assumption; we sketch how to do so at the end of the section.

\begin{theorem}
    $\ps$ is \textsf{P}-complete for games of perfect information.
\end{theorem}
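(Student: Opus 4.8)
The plan is to show that $\ps$ is equivalent to linear programming (\lp) under logspace reductions in both directions; since \lp{} is well known to be \textsf{P}-complete under such reductions, this yields both membership in \textsf{P} and \textsf{P}-hardness at once, matching \cref{thm:1_informal}. Throughout, $t$ and $\delta$ are fixed constants, so quantities like $n^t$ are polynomial.

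\emph{Membership / reduction to \lp.} I would take the $ns$ entries $\lambda_{ik}$ of $\L$ as the LP variables. By \cref{eq:utilities} the implemented matrix $\U-\L\P$ is affine in $\L$, so every security requirement becomes a linear inequality. The subgames of $G$ are exactly the subtrees $G_v$ rooted at the nodes $v$ of the game tree $T$; for a coalition $C$ with $|C|\le t$, a leaf $\ell_j$ of $G_v$ is $C$-inducible from $G_v$ precisely when every $(-C)$-owned node on the $v$-to-$\ell_j$ path takes its $s^*$-edge, and this set depends only on $T$ and $s^*$, not on $\L$. For each such $(v,C,\ell_j)$ with $\ell_j$ not the honest leaf $\ell_{j^*(v)}$ of $G_v$, and each $i\in C$, I emit the constraint $\sum_k\lambda_{ik}(\phi_{kj}-\phi_{k,j^*(v)})\ge\delta-u_{i,j^*(v)}+u_{ij}$, together with $\lambda_{ik}=0$ for every $(i,k)$ with $c_{ik}=\infty$, the linear constraints expressing self-containment, and the objective $\sum_{c_{ik}\ne\infty}c_{ik}\lambda_{ik}$. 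There are $O(|T|\,n^t\,m\,t)$ constraints, and the whole LP is writable in logarithmic workspace: enumerating the tuples $(v,C,\ell_j,i)$ needs only a constant number of node pointers and counters, and deciding $C$-inducibility or locating $j^*(v)$ amounts to tracing a single root-to-leaf path. Solving the LP in polynomial time returns an optimal $\L$, or reports infeasibility (output $\bot$), which simultaneously shows $\ps\in\textsf{P}$.

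\emph{Hardness.} Given an instance $\min\{\mathbf c^\top\mathbf x:A\mathbf x\le\mathbf b\}$ of \lp{} with $d$ variables and $q$ rows, I would build a one-player star game: the root, owned by $P_1$, has $q+1$ leaf children, and $s^*$ selects the last one, $\ell_{q+1}$, as the honest leaf. The only subgame admitting deviations is $G$ itself, whose non-trivial security conditions are exactly the $q$ inequalities $[\U-\L\P]_{1,q+1}-[\U-\L\P]_{1,j}\ge\delta$ for $j\le q$, one per LP row. Since the feasible region and objective of an LP are invariant under positive scaling of its rows, I may assume $\|A\|_\infty$ is as small as needed; then, with $s=d+O(1)$ symbols --- the first $d$ carrying the variables $\lambda_{1,1},\dots,\lambda_{1,d}$, one forced to $0$ by an $\infty$ cost to absorb the column-normalization of $\P$, and a zero-cost symbol kept inert in the real constraints to absorb the self-containment equalities --- I can pick the columns of $\P$ to be genuine pdfs realizing the (scaled, sign-adjusted) rows of $A$ as the constraint coefficients, and pick the entries of $\U$ so the right-hand sides become $\mathbf b$. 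Setting $c_{1,k}=c_k$ for $k\le d$ equates the two objectives, so feasible $\L$ correspond to feasible $\mathbf x$ of equal cost, an optimal $\L$ yields an optimal $\mathbf x$, and $\bot$ is returned exactly when the LP is infeasible; the map is evidently logspace-computable.

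\emph{Main obstacle.} I expect the hardness direction, and within it the design of the information structure, to be the crux: the security inequalities have the rigid shape $\sum_k\lambda_{ik}(\phi_{kj}-\phi_{k,j^*})\ge(\cdots)$, whose coefficient vectors automatically sum to zero and lie in $[-1,1]$, while the columns of $\P$ must be bona fide probability distributions, so one must verify that after rescaling the LP the target coefficients genuinely fit as perturbations of a small uniform base and that the auxiliary normalization and self-containment symbols can be made simultaneously inert in the real constraints and in the objective. A secondary point is confirming the membership reduction truly uses only logarithmic space --- in particular that $C$-inducibility of a leaf is decidable by one root-to-leaf path traversal rather than by storing the reachable set --- and pinning down the self-containment condition of \cref{deposit_intro}, which, being a conjunction of linear (in)equalities on $\L$, is handled uniformly on both sides.
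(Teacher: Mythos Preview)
Your membership reduction is essentially the paper's: both flatten $\L$ to a vector of $ns$ decision variables, observe that every security constraint from the definition of $\delta$-strong $t$-robust security is linear in these variables via \cref{eq:utilities}, enumerate the constraints by walking the tree and the $t$-inducible regions, add the linear self-containment constraints, and minimize the given linear cost. Your path-tracing argument for logspace is a bit more explicit than the paper's, but the content is the same.

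Your hardness reduction is genuinely different. The paper builds a \emph{two}-player game: $P_1$ at the root chooses among $m$ gadgets (one per LP row) or a target leaf, and inside each gadget $P_2$ chooses ``left'' (utility $\overline{\mathbf b}_i$) or ``right'' (utility $0$); the intended profile has $P_2$ go right everywhere, and the emission pdf on the right leaf of gadget $i$ is the normalized column $\overline{\mathbf a}_i$. The $\infty$ costs are used to freeze $P_1$'s entire row of $\L$ and $P_2$'s payment on the $\top$ symbol, so that the subgame-perfection constraint for $P_2$ in gadget $i$ becomes exactly $\mathbf a_i^\top\mathbf x\ge\mathbf b_i$, and non-negativity of $\mathbf x$ comes from self-containment $\lambda_{1k}+\lambda_{2k}\ge 0$ with $\lambda_{1k}=0$. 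You instead use a \emph{one}-player star, so there is only the root subgame and only Nash (not properly subgame-perfect) constraints; you encode each LP row as the difference $\phi_{\cdot,j}-\phi_{\cdot,q+1}$ of two pdf columns, absorb the forced zero column-sum of this difference into a single $\infty$-cost symbol, and get $\mathbf x\ge\mathbf 0$ directly from self-containment since $\L$ has a single row. Your construction is strictly simpler and shows that hardness already arises from one-player Nash constraints; the paper's construction, by contrast, exercises the subgame-perfect aspect of the definition and makes the interplay between the two players' rows of $\L$ do real work.

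One caution: you speak of ``self-containment equalities'' and reserve an extra zero-cost symbol to ``absorb'' them. In this paper self-containment is the \emph{inequality} $\sum_i\lambda_{ik}\ge 0$ (see the LP in the membership proof and the sentence ``$\mathbf x\ge\mathbf 0$ since the payment scheme is self-contained'' in the hardness proof), not the zero-inflation equality of \cref{lemma:zero_inflation_U-E_column_sum_zero}. With the inequality, your one-player reduction needs no such extra symbol: $\lambda_{1k}\ge 0$ is exactly the LP sign constraint and nothing has to be absorbed. If you had in mind the equality version, your one-player game would collapse (a single row summing to zero with non-negative entries forces $\L=\mathbf 0$), so make sure your write-up uses the inequality form.
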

We prove this in the next two subsections, by reducing both to and from $\lp$ using logspace-reductions. For games of imperfect information, it is unlikely we can find an optimal payment scheme to change the equilibrium, as even computing the equilibrium for these games is known to be \textsf{PPAD}-complete. As a result, we conjecture the problem to be hard.
\begin{conjecture}
    $\ps$ is \textsf{PPAD}-hard for imperfect information games.
\end{conjecture}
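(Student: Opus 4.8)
The plan is to establish \textsf{PPAD}-hardness by exhibiting a polynomial-time reduction \emph{from} a canonical \textsf{PPAD}-complete problem — computing a Nash equilibrium of a finite game of imperfect information, shown \textsf{PPAD}-complete by Daskalakis, Goldberg and Papadimitriou \cite{ppad} — \emph{to} $\ps$. Concretely, given a target game $H$ whose equilibrium we wish to find, I would construct (ideally in logspace, to parallel the perfect-information case) an instance $\langle G, \Sigma, \P, s^*, \mathbf{c}\rangle$ of $\ps$ together with a security margin $\delta$, such that any valid output $\L$ (or a certificate that none exists) lets us read off an equilibrium of $H$. Since producing such an $\L$ is exactly what a solver for $\ps$ does, this would transfer the \textsf{PPAD}-hardness of equilibrium computation to our problem.

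The construction would embed $H$ into $G$ as a subgame of imperfect information and designate $s^*$ as the profile in which each party ``commits'' to the mixed action it plays in $H$. The information structure $\langle \Sigma, \P\rangle$ and the cost vector $\mathbf{c}$ are then engineered so that the security constraints — namely $u_i(s^*) \ge \mathbf{u}_i + \delta$ for every $C$-inducible deviation vector $\mathbf{u}$ — collapse, at the optimum, precisely onto the best-response conditions of $H$: a deviation that would profit a player in $H$ becomes a $C$-inducible vector violating the margin unless the committed strategies already form an equilibrium. A key design constraint is that $\P$ must be chosen with \emph{low rank} (far from left-invertible): by \cref{lemma:Phi_must_be_left_invertible_to_implement_any_E}, a left-invertible $\P$ would let the mechanism implement an arbitrary utility matrix $\E$ and thereby make $s^*$ dominant outright, trivializing the instance. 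The hardness must therefore live in the regime of genuinely limited information, where the payment scheme can only reshape utilities within the image of the map $\L \mapsto \L\P$ and cannot by fiat override the equilibrium structure of the embedded game.

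The main obstacle — and the reason this remains a conjecture rather than a theorem — is that for a \emph{fixed} game $G$ and a \emph{fixed} intended profile $s^*$, deciding feasibility of and optimizing over $\L$ is essentially the same linear-programming argument that places the perfect-information version in \textsf{P}. Thus the \textsf{PPAD}-hardness cannot come from the choice of $\L$ alone; it must be extracted from the way imperfect information enlarges and entangles the set of $C$-inducible deviation-utilities. The crux is to encode a Brouwer/Nash fixed point into the \emph{geometry} of these inducible vectors, arranging that the simultaneous satisfaction of the margin constraints across all coalitions is equivalent to a fixed-point condition, so that the feasible region for $\L$ is non-empty iff $H$ admits an equilibrium witnessed by $s^*$, and the particular feasible $\L$ records the equilibrium mixing probabilities. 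Making this entanglement tight, while keeping every column of $\P$ a valid pdf, the payment scheme self-contained, and the reduction parsimonious enough to preserve totality (so that it is a legitimate \textsf{PPAD} reduction and not merely an \textsf{NP}-style many-one reduction), is the delicate step I do not see how to complete cleanly — which is exactly why the statement is offered as a conjecture.
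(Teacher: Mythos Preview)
The paper does not prove this statement: it is stated as an open conjecture, with the only justification being the informal remark that ``even computing the equilibrium for these games is known to be \textsf{PPAD}-complete,'' from which the authors infer that finding an optimal payment scheme is ``unlikely'' to be efficient. There is no reduction, no construction, and no further argument in the paper.

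Your proposal is therefore not competing against any proof in the paper; rather, you have correctly identified that the statement is a conjecture and gone further than the authors by sketching a candidate reduction strategy and, crucially, articulating the obstruction to making it work. Your observation that for a \emph{fixed} game $G$ and a \emph{fixed} pure intended profile $s^*$ the feasibility/optimization problem in $\L$ is still a linear program is exactly the right diagnosis: the paper's own upper-bound argument (the reduction to $\lp$) does not obviously break in the imperfect-information case, since the constraints $-\AA{t}\mathbf{R}\boldsymbol{\lambda} \ge \e{t} - \AA{t}\mathbf{u}$ remain linear in $\boldsymbol{\lambda}$ once the set of $C$-inducible leaves is enumerated. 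So any \textsf{PPAD}-hardness would have to be smuggled in through the \emph{specification} of the inducible region or through making $s^*$ itself depend on the unknown equilibrium --- neither of which the paper's problem formulation (pure $s^*$ given as input) seems to allow. In short, your write-up is not a proof, but it is a more honest and more informative treatment of the conjecture than the paper itself provides.
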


\subsection{Upper Bound: Reduction to LP} 
In this section, we show how to reduce $\ps$ to \lp. Since the feasible region is a convex polyhedron, it is unsurprising that we can use linear programming to decide the minimal size of the deposits necessary to establish security. In particular, we can write the necessary constraints for $\delta$-strong $t$-robust game-theoretic security as a set of linear constraints. For convenience, we will represent the utility matrix $\U$ as a vector $\mathbf{u} \in \mathbb{R}^{nm}$ in row-major order. We will then collect the set of necessary constraints in a matrix $\AA{t} \in \mathbb{R}^{\alpha^{(t)}\times nm}$ where $\alpha^{(t)}$ denotes the number of such constraints. We also let $\e{t}=[\delta,\delta,\ldots,\delta]^\top \in \mathbb{R}^{\alpha^{(t)}}$ be a vector only containing $\delta$. Note that $\alpha^{(t)}$ is a constant that depends on the structure of the game.
\begin{proposition}
    $\ps$ can be reduced to $\lp$ in logspace.
\end{proposition}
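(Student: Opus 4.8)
The plan is to show that $\ps$ can be encoded as a linear program by expressing all of its constraints as linear (in)equalities in the entries of $\L$, and then arguing that the whole reduction is computable in logspace. The key observation is that by \cref{eq:utilities}, the utility matrix implemented by $\Gamma^\L(G)$ is $\E = \U - \L\P$, which depends \emph{linearly} on $\L$. Since $\text{vec}(\cdot)$ is a linear operation and $\L \mapsto \L\P$ is linear, we may write $\text{vec}(\U - \L\P) = \mathbf{u} - \mathbf{B}\,\text{vec}(\L)$ for a fixed matrix $\mathbf{B} \in \mathbb{R}^{nm \times ns}$ whose entries are read off directly from $\P$ (each block of $\mathbf{B}$ is a copy of $\P^\top$). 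Hence any constraint that is linear in the implemented utilities $\mathbf{e} = \mathbf{u} - \mathbf{B}\,\text{vec}(\L)$ is also linear in the decision variable $\text{vec}(\L)$.

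Next I would enumerate the constraints that must be imposed. First, $\delta$-strong $t$-robust game-theoretic security: for every subgame, every coalition $C$ with $|C| \le t$, every $C$-inducible utility vector $\mathbf{u}'$ in that subgame, and every $i \in C$, we require $u_i(s^*) \ge \mathbf{u}'_i + \delta$. Crucially, in a game of perfect information the honest strategy profile $s^*$ is pure, so $u_i(s^*)$ is just the $i$-th coordinate of the implemented utility vector at the (unique) honest leaf of that subgame, hence a single entry of $\mathbf{e}$; and each $C$-inducible $\mathbf{u}'$ corresponds to some leaf reachable when $-C$ plays $s^*_{-C}$, so $\mathbf{u}'_i$ is likewise a single entry of $\mathbf{e}$. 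Each such requirement is therefore of the form $[\mathbf{e}]_a - [\mathbf{e}]_b \ge \delta$, i.e. a single linear inequality; collecting them all gives the matrix $\AA{t}$ and the right-hand side $\e{t}$ alluded to in the excerpt, with $\AA{t}\,\mathbf{e} \ge \e{t}$. Substituting $\mathbf{e} = \mathbf{u} - \mathbf{B}\,\text{vec}(\L)$ turns this into $-\AA{t}\mathbf{B}\,\text{vec}(\L) \ge \e{t} - \AA{t}\mathbf{u}$, which is linear in $\text{vec}(\L)$. Second, self-containedness is (by definition, a linear condition on column sums of $\L$, or whatever the paper's formal definition turns out to be) again a system of linear equalities/inequalities in $\text{vec}(\L)$. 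Third, the $\infty$ entries of the cost vector $\mathbf{c}$ are handled by adding, for each coordinate $k$ with $c_k = \infty$, the linear equality $[\text{vec}(\L)]_k = 0$ and then zeroing out that coordinate of $\mathbf{c}$; this faithfully implements the stated semantics (e.g. honest invariance) without the symbol $\infty$ appearing in the objective. The objective $\mathbf{c}^\top \text{vec}(\L)$ is already linear. Assembling these pieces yields an \lp{} instance whose feasible region is exactly the set of admissible payment schemes and whose optimum coincides with the optimal payment scheme; infeasibility of the \lp{} corresponds precisely to the output $\bot$.

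Finally I would verify the reduction is logspace-computable. The matrices $\mathbf{B}$, $\AA{t}$, and the self-containedness constraints are obtained by simple, local transcription: $\mathbf{B}$ is a block-structured copy of $\P$, and the security constraints are generated by walking the game tree $G$ to identify, for each subgame and each coalition $C$ of size $\le t$ (a constant number, since $t$ is fixed and $|C|\le t$), the honest leaf and the $C$-inducible leaves, each contributing one row with two nonzero entries $\pm 1$. Such a tree walk, together with the bookkeeping to index rows and columns, uses only a constant number of pointers and counters of logarithmic bit-length, hence is in logspace; computing $-\AA{t}\mathbf{B}$ and $\e{t} - \AA{t}\mathbf{u}$ is a logspace-computable matrix product because each entry is a bounded-length sum of products of input entries (this is the standard fact that iterated addition and a single matrix multiplication are in logspace, whereas only unbounded iterated multiplication would be problematic). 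Therefore the map from a $\ps$ instance to the corresponding \lp{} instance is a logspace reduction.

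The main obstacle I anticipate is the second paragraph: pinning down exactly which linear constraints capture $\delta$-strong $t$-robust game-theoretic security across \emph{all} subgames, and confirming that in the perfect-information setting each one really does reduce to a two-term linear inequality in the entries of $\mathbf{e}$ (this hinges essentially on $s^*$ being pure, so that ``the honest payoff'' and ``an inducible payoff'' are both single leaf-utilities rather than expectations over mixed play). The remaining ingredients — linearity of $\E = \U - \L\P$ in $\L$, the encoding of the $\infty$-cost coordinates as equalities, and the logspace bound — are routine once the constraint system is correctly set up.
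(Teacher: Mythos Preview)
Your proposal is correct and follows essentially the same approach as the paper: both encode the security constraints as a system $\AA{t}\mathbf{e}\ge\e{t}$ whose rows have exactly one $+1$ and one $-1$ (coming from comparing the honest leaf to a $t$-inducible leaf), substitute $\mathbf{e}=\mathbf{u}-\mathbf{B}\,\text{vec}(\L)$ using the block-$\P$ matrix (the paper calls this $\mathbf{R}$), add the linear self-containedness constraints, replace $\infty$-cost coordinates by equality constraints, and argue logspace via a tree walk with a constant number of pointers. The only cosmetic difference is that the paper describes building $\AA{t}$ via a recursive pass that maintains the $t$-inducible region, whereas you phrase it as enumerating subgames and coalitions directly; these produce the same constraint set.
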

\begin{proof}
First note that the set of utility matrices with $\delta$-strong $t$-robust game-theoretic security can be recovered as the set of solutions to the following equation:
\begin{equation}\label{eq:aae}
\AA{t} \mathbf{v} \geq \e{t}
\end{equation}
We note that, in general, it is hard to give an exact expression for $\AA{t}$ since this is tightly dependent on the structure of the game, as is $\alpha^{(t)}$. Instead, the matrix $\AA{t}$ can be computed using a simple recursive procedure. In the base case, the leaves, there are no constraints. At each branch owned by a player $P_i$, we need to bound the probability of each undesirable outcome in terms of the honest outcome $\mathbf{u}^*$. To do so, we compute the \emph{$t$-inducible region}, defined as the set of outcomes inducible by a coalition $C$ containing $P_i$ of size $\leq t$. For each outcome $\mathbf{v}$ in the $t$-inducible region, we add a column $\boldsymbol{\psi} \in \mathbb{R}^{nm}$ to $\AA{t}$ that ensures that $\mathbf{u}^*_i \geq \mathbf{v}_i + \delta$. To do so, suppose $\mathbf{u}^*_i$ and $\mathbf{v}_i$ have indices $a,b$ respectively, we then let $\boldsymbol{\psi}_{im+a} \gets 1$, and $\boldsymbol{\psi}_{im+b} \gets -1$ and zero elsewhere, and add an entry containing $\delta$ to $\e{t}$. This procedure can be completed using a single pass of the game tree by keeping track of the $t$-inducible region as we go along. Note that there is a technical issue since our decision variables $\L$ are not in vector form, as is usual of linear programming. To remedy this, we also want to collect the deposits in a vector $\boldsymbol{\lambda} \in \mathbb{R}^{ns}$ in row-major order. For a given information structure $\langle \Sigma, \P\rangle$, we construct a matrix $\mathbf{R}$ equivalent to $\P$ in the following way: for every index $ij$ in $\L$, we construct the `base matrix' $\mathbf{L}^{ij}$ that is 1 in index $ij$, and 0 everywhere else. We then compute a row of $\mathbf{R}$ by computing the product $\mathbf{L}^{ij} \P$ and putting it in row-major order. It is not hard to see that the image of $\P$ is isomorphic to the column space of $\mathbf{R}$, and hence we say $\boldsymbol{\lambda}$ implements the utility vector $\mathbf{e}$ iff $\mathbf{e} = \mathbf{u} - \mathbf{R}\boldsymbol{\lambda}$. We now substitute this in \cref{eq:aae} to get 
$
\AA{t}(\mathbf{u}-\mathbf{R}\boldsymbol{\lambda}) \geq \e{t}
$. Next, we move the constant terms to the right-hand side to yield the following:
\begin{equation}\label{eq:aae3}
-\AA{t}\mathbf{R}\boldsymbol{\lambda} \geq \e{t}-\AA{t}\mathbf{u}
\end{equation}
We also have to ensure the payment scheme is self-contained, but this is a simple set of linear constraints $\sum_{i=1}^n \lambda_{is+k}$ for every $k=1\ldots s$. Finally, note that our objective function is $\mathbf{c}^\top \boldsymbol\lambda$, and since all constraints are linear we can produce the following linear program.
\begin{align*}
	\text{min}\quad&\mathbf{c}^\top \boldsymbol{\lambda}\\
	\text{s.t.}\quad
	&-\AA{t}\mathbf{R}\boldsymbol{\lambda} \geq \e{t}-\AA{t}\mathbf{u}\\
	&\sum_{i=1}^n \lambda_{is+k} \geq 0&& \forall\, k=1\ldots s
\end{align*}
To deal with $\infty$ in the cost function, we may set the corresponding cost of the linear program to an arbitrary value and add an equality constraint to ensure the decision variable equals zero. Note that the linear program can be constructed by maintaining a constant set of pointers to the game given as input, which concludes the proof.\qed
\end{proof}

\subsection{Lower Bound: Reduction from LP}
We now show how to reduce \lp\, to $\textsf{PaymentScheme}^0_1$ using logarithmic space. The resulting game is a two-player finite game of perfect information. The reduction can easily be adapted to any $\delta\geq 0, t\geq1$. Consider an arbitrary instance of \lp, $\{\min \mathbf{c}^\top \mathbf{x} \mid \mathbf{A}\mathbf{x} \geq \mathbf{b}, \mathbf{x}\geq \mathbf{0}\}$,
where $\mathbf{c}=(\mathbf{c}_i) \in \mathbb{R}^n, \mathbf{A}=(\mathbf{a}_{ij}) \in \mathbb{R}^{m \times n}$, and $\mathbf{b}=(\mathbf{b}_i) \in \mathbb{R}^m$. Without loss of generality, we will assume that the columns of $\mathbf{A}$ have a positive column sum. This can be achieved by shifting $\mathbf{A}$ and $\mathbf{b}$ correspondingly.

\begin{proposition}
    $\lp$ can be reduced to $\textsf{PaymentScheme}^0_1$ in logspace.
\end{proposition}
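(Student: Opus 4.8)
The plan is a reduction that turns an arbitrary LP instance $\{\min \mathbf{c}^\top\mathbf{x} \mid \mathbf{A}\mathbf{x}\ge\mathbf{b},\ \mathbf{x}\ge\mathbf{0}\}$, $\mathbf{A}\in\mathbb{R}^{m\times n}$, into an instance of $\textsf{PaymentScheme}^0_1$ whose feasible self-contained payment schemes are in objective-preserving correspondence with the feasible points of the LP. After the stated normalization (and one further harmless shift of $\mathbf{A}$ and $\mathbf{b}$ so that $\mathbf{A}$ is entrywise non-negative with a known upper bound $\beta$ on each of its row sums, $\beta$ computable by iterated addition hence in $\mathsf{L}$), I build the following two-player perfect-information game $G$: a trivial root owned by $P_1$ with a single child $v$ owned by $P_2$, where $v$ has $m+n+1$ leaves $\ell_0,\ell_1,\dots,\ell_m,\ell_{m+1},\dots,\ell_{m+n}$; the honest profile $s^*$ makes $P_1$'s forced move and then plays $\ell_0$. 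The information structure uses $\Sigma=\{\sigma_1,\dots,\sigma_n,\sigma_{n+1}\}$ (so $s=n+1$), with $\sigma_1,\dots,\sigma_n$ ``carrying'' the $n$ LP variables and $\sigma_{n+1}$ a slack symbol. Its emission matrix $\P$ is: $\ell_0$ emits $\sigma_{n+1}$ with probability $1$; $\ell_i$ (for $1\le i\le m$) emits $\sigma_j$ with probability $a_{ij}/\beta$ and $\sigma_{n+1}$ with the remaining probability; and $\ell_{m+j}$ (for $1\le j\le n$) emits $\sigma_j$ with probability $1/\beta$ and $\sigma_{n+1}$ otherwise. Non-negativity of $\mathbf{A}$ and the bound $\beta$ are exactly what make every column of $\P$ a genuine pdf. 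Finally $\U$ is identically $0$ for $P_1$ and, for $P_2$, $u_{2,0}=0$, $u_{2,i}=b_i/\beta$ for $1\le i\le m$, and $u_{2,m+j}=0$ for $1\le j\le n$; the cost vector forces all of $P_1$'s payments to $0$ with $\infty$, and takes $c'_{2,j}=\mathbf{c}_j$ for $j\le n$ and $c'_{2,n+1}=-\sum_{j}\mathbf{c}_j$.

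I then run the recursive constraint-generation procedure of the previous subsection on $G$ and check it yields exactly the desired linear system. $P_1$'s node is trivial, contributing only a vacuous constraint. At $v$ the $1$-inducible region is $\{\ell_0,\dots,\ell_{m+n}\}$, so for each leaf $\ell$ we get the constraint $[\U-\L\P]_{2,0}\ge[\U-\L\P]_{2,\ell}$ on the implemented utilities. Writing $x_j:=\lambda_{2,j}-\lambda_{2,n+1}$ and substituting the emission probabilities, $\ell_0$ gives $0\ge0$, $\ell_i$ simplifies to $\sum_{j}a_{ij}x_j\ge b_i$, and $\ell_{m+j}$ simplifies to $x_j\ge0$; meanwhile self-containedness contributes only $\lambda_{2,k}\ge0$ for each $k$ (once $P_1$'s payments vanish), which is a harmless gauge condition. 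A one-line computation gives $\mathbf{c}'^\top\text{vec}(\L)=\sum_j\mathbf{c}_j(\lambda_{2,j}-\lambda_{2,n+1})=\mathbf{c}^\top\mathbf{x}$ identically in $\lambda_{2,n+1}$. Hence $\mathbf{x}\mapsto(\lambda_{2,j}:=x_j,\ \lambda_{2,n+1}:=0,\ \L\text{ zero elsewhere})$ sends feasible LP points to feasible self-contained schemes of equal cost, and $\L\mapsto\mathbf{x}$ with $x_j:=\lambda_{2,j}-\lambda_{2,n+1}$ sends feasible schemes back to feasible LP points of equal cost. Therefore an optimal payment scheme gives an optimal LP solution, and $\textsf{PaymentScheme}^0_1$ returns $\bot$ exactly when the LP is infeasible.

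To finish, I check logspace-computability of the reduction: the game is a star on $O(m+n)$ leaves, describable with a constant number of pointers/counters into the input, and every entry of $\U$, $\P$, $\mathbf{c}'$ is a fixed rational transformation of the input data (a single division by $\beta$, a subtraction from $1$, or a negated sum), all in $\mathsf{L}$. Adapting the reduction from $\delta=0$ to arbitrary $\delta\ge0$ amounts to replacing $b_i$ by $b_i+\delta$ in $u_{2,i}$; adapting from $t=1$ to $t\ge1$ amounts to adding $t-1$ dummy players who own only trivial nodes, so that the $t$-inducible region at $v$ is unchanged. Since $\lp$ is $\mathsf{P}$-hard under logspace reductions, this establishes $\mathsf{P}$-hardness of $\ps$, which together with the reduction to $\lp$ of the previous subsection yields $\mathsf{P}$-completeness and hence \cref{thm:1_informal}.

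The step I expect to be the main obstacle is getting the constraint coefficient matrix $\mathbf{A}$ to live inside a genuinely stochastic emission matrix while independently pinning the right-hand side to $\mathbf{b}$ via the utilities --- this is what forces both the slack symbol $\sigma_{n+1}$ and the normalization of $\mathbf{A}$ --- and, simultaneously, arranging the game so that the recursive procedure emits precisely the $m$ LP constraints plus the $n$ sign constraints $\mathbf{x}\ge\mathbf{0}$ and nothing spurious, which is the role of the auxiliary ``unit-deviation'' leaves $\ell_{m+1},\dots,\ell_{m+n}$; verifying that the self-containedness constraints and the $\infty$ entries interact harmlessly with this encoding is the remaining fiddly point.
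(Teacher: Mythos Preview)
Your reduction is correct and reaches the same conclusion, but by a genuinely different (and structurally simpler) route than the paper's. The paper builds a depth-two tree: $P_1$ at the root chooses one of $m$ gadgets or a distinguished ``target'' leaf, and inside the $i$th gadget $P_2$ chooses between sabotaging (utility $\overline{\mathbf{b}}_i$, emission $\top$) or satisfying (utility $0$, emission proportional to row $\mathbf{a}_i$); the payment on symbol $\bot_j$ is identified directly with $x_j$, and non-negativity $\mathbf{x}\ge\mathbf{0}$ is read off from self-containedness once the $\infty$ entries kill $P_1$'s row and $\lambda_{2,\top}$. You instead collapse everything to a star: $P_1$ is a dummy, $P_2$ fans out once to $m+n+1$ leaves, the extra ``unit-deviation'' leaves $\ell_{m+j}$ encode $x_j\ge 0$ as genuine game constraints, and a gauge variable $\lambda_{2,n+1}$ with cost $-\sum_j c_j$ makes the objective depend only on the differences $x_j=\lambda_{2,j}-\lambda_{2,n+1}$. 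Your construction is more elementary --- no strategic interplay between two players, one fewer $\infty$ in the cost vector, and the feasible-scheme/feasible-LP correspondence is a clean affine surjection in both directions --- whereas the paper's two-level game uses $P_1$'s incentive to ``sabotage'' to narrate \emph{why} every inequality must hold, which is conceptually pleasant but adds a layer to verify. One caveat the two arguments share: both need the (scaled) rows of $\mathbf{A}$ to be valid pdfs, hence entrywise non-negativity of $\mathbf{A}$; the paper hand-waves this as a ``shift of $\mathbf{A}$ and $\mathbf{b}$'' and you call it a ``harmless shift'', but neither is literally a shift preserving the LP, so in both cases one should really appeal to $\lp$ remaining $\mathsf{P}$-hard under the required normalization rather than claim a direct transformation.
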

\begin{proof}
At a high level, the reduction is as follows. We first describe the game, and afterwards derive a suitable information structure. The game consists of two players $P_1,P_2$. The root of the game consists of a move for player $P_1$ who wants to `sabotage' satisfaction of the program. They get utility 1 if they sabotage an inequality, and 0 otherwise. They are allowed to choose between a set of $m$ gadgets, one for each inequality $\mathbf{a}_i^\top \mathbf{x} \geq \mathbf{b}_i$. In addition, they can choose a `target' leaf that pays 0 to both players. Each gadget consists of a move for the other player $P_2$ who can choose whether to satisfy their inequality or not. If they sabotage their inequality (move `left') they get $\mathbf{b}_i$ utility, otherwise if they move `right' they get 0 utility. See \cref{fig:reduction_to_ps} for an illustration. Clearly, the SPE of the game is for $P_2$ to move left in the $i^\text{th}$ gagdet if $\mathbf{b}_i>0$, and for $P_1$ to choose any convex combination of the gadgets for which the players move left. Our goal is to design an information structure for which a payment scheme can ensure that $P_1$ chooses the target if and only if all inequalities are satisfied.

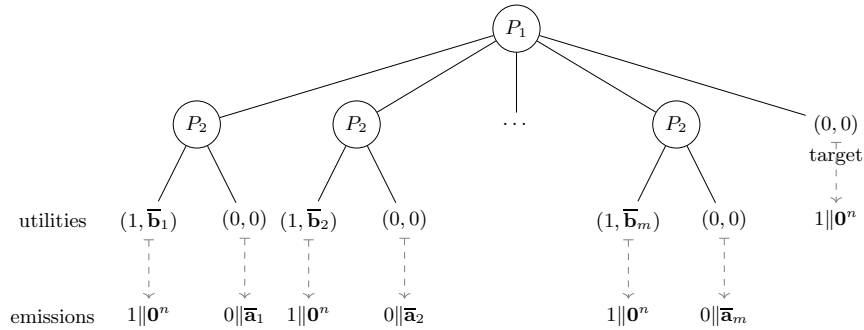
\begin{figure}
    \centering
    \scalebox{0.85}{
    \begin{tikzpicture}
	[level 1/.style={sibling distance=25mm},
	level 2/.style={sibling distance=15mm}, 
	level distance=1.5cm,align=center,
    every node/.style={thin}]
    \node at (-7.75,-3)  {utilities};
    \node at (4.5,-2)  {target};
    \node at (-7.75,-4.5)  {emissions};
	\node[draw,circle,xshift=-0.5cm] {$P_1$}
	child {
		node[draw,circle] {$P_2$}
		child { 
		    node {$(1,\overline{\mathbf{b}}_1)$}
		    child[edge from parent/.style={|->,dashed,gray,draw}] {
		        node {$1 \Vert \mathbf{0}^n$}
		    }
		}
		child { 
		    node {$(0,0)$} 
		    child[edge from parent/.style={|->,dashed,gray,draw}] {
		        node {$0 \Vert \overline{\mathbf{a}}_1$}
		    }
		}
	}
	child {
		node[draw,circle] {$P_2$}
		child { 
		    node {$(1,\overline{\mathbf{b}}_2)$} 
		    child[edge from parent/.style={|->,dashed,gray,draw}] {
		        node {$1 \Vert \mathbf{0}^n$}
		    }
		}
		child { 
		    node {$(0,0)$} 
		    child[edge from parent/.style={|->,dashed,gray,draw}] {
		        node {$0 \Vert \overline{\mathbf{a}}_2$}
		    }
		}
	}
	child{node{$\cdots$}}
	child {
		node[draw,circle] {$P_2$}
		child { 
		    node {$(1,\overline{\mathbf{b}}_m)$} 
		    child[edge from parent/.style={|->,dashed,gray,draw}] {
		        node {$1 \Vert \mathbf{0}^n$}
		    }
		}
		child { 
		    node {$(0,0)$}
		    child[edge from parent/.style={|->,dashed,gray,draw}] {
		        node {$0 \Vert \overline{\mathbf{a}}_m$}
		    }
		}
	}
	child { 
	    node {$(0,0)$} 
	    child[edge from parent/.style={|->,dashed,gray,draw}] {
	        node {$1 \Vert \mathbf{0}^n$}
	    }
    };
	\end{tikzpicture}}
    \caption{Depiction of the reduction from $\lp$ to $\textsf{PaymentScheme}^0_1$. The dashed arrows depict the corresponding information structure (pdf for each leaf). The player $P_1$ wants to sabotage satisfiability of the circuit and gains 1 utility for doing so (0 otherwise). The player $P_2$ will sabotage the $i^\textrm{th}$ gadget (and hence allow $P_1$ to win) if and only if the $i^\textrm{th}$ inequality is not satisfied. A payment scheme corresponds to an assignment of the variables in the $\lp$-instance, with emission probabilities proportional to the weights, such that an equilibrium with the target in its support corresponds to a satisfying assignment of the variables. }
    \label{fig:reduction_to_ps}
\end{figure}

We now describe the information structure of the game. We have to specify an alphabet and a pdf for each leaf of the game. We will have $\Sigma=\{\top, \bot_1, \bot_2, \ldots \bot_n\}$, where $\top$ means `all inequalities are satisfied', while $\bot_i$ is associated with the decision variable $\mathbf{x}_i$. When $P_2$ satisfies their inequality, the symbol $\top$ is outputted with probability 1. When $P_2$ sabotages their inequality, the column $\mathbf{a}_i$ is used a pdf to sample the symbols $\{\mathbf{x}_i\}_{i=1}^n$. Of course, $\mathbf{a}_i$ is not necessarily a pdf, but we can normalize it by defining $\overline{\mathbf{a}}_{ij} = \frac{\mathbf{a}_{ij}}{\sum_{k=1}^n \mathbf{a}_{ik}}$. We similarly define $\overline{\mathbf{b}}_i = \frac{\mathbf{b}_i}{\sum_{k=1}^n \mathbf{a}_{ik}}$ and use $\overline{\mathbf{b}}_i$ in lieu of $\mathbf{b}_i$ in the gadgets. This operation is well-defined since $\mathbf{A}$ was assumed to have positive column sums, and inequalities are preserved under positive scaling. To summarize, when a player goes left, the corresponding pdf is $[1,0,0,\ldots,0]^\top\in \mathbb{R}^{n+1}$, and when a player goes right, the pdf is $0\Vert\overline{\mathbf{a}} \in \mathbb{R}^{n+1}$. As the intended strategy profile $s^*$, we consider any strategy profile where $P_2$ always move right and $P_1$ chooses an arbitrary gadget. The cost function $\hat{\mathbf{c}}$ of the payment scheme will be defined as follows,
$$
    \hat{\mathbf{c}} := [\overbrace{\infty,\infty,\ldots, \infty}^\text{$n+1$ terms},\infty,\mathbf{c}_1,\mathbf{c}_2,\ldots,\mathbf{c}_n]^\top \in \mathbb{R}^{2n+2},
$$
Now, suppose $\L\in\mathbb{R}^{(n+1)\times(m+1)}$ is output as an optimal payment scheme. Let $\L_{\bullet i}$ denote the $i^\textrm{th}$ \emph{row} of $\L$ (as a column vector) corresponding to the leaf where $P_i$ goes right, and let $\P_i=(0 \Vert \overline{\mathbf{a}}_i)$ be the column of $\P$ corresponding to going right in the $i^\textrm{th}$ gadget. Now, since some of the weights are $\infty$, we know that $\L_1=\mathbf{0}$ and $\L_{\bullet 1}=\mathbf{0}$. Hence, the utility vector going right remains $[1,0,0,\ldots,0]^\top$ for each gadget, and the utility for $P_1$ remains unchanged. By optimality and since $\delta=0, t=1$, we know that $s^*$ must an SPE. This means that $P_2$ must receive (at least) as much utility going left as they do going right (otherwise $P_1$ would not hit the target).  Then by \cref{eq:utilities}, we must have,
$$
    \forall i.\,(-\P_i^\top \L_{\bullet 2} \geq \overline{\mathbf{b}}_i) \quad \Longleftrightarrow \quad \forall i.((0 \Vert \overline{\mathbf{a}}_i)^\top (-\L_{\bullet 2}) \geq \overline{\mathbf{b}}_i \quad \Longleftrightarrow\quad \mathbf{A} \mathbf{x} \geq \mathbf{b}
$$
where $\mathbf{x} := [\L_{i2}]_{i=1}^m$ is the vector consisting of the non-zero (last $m$) entries of $\L_{\bullet 2}$. This means that $s^*$ is an SPE if and only if the inequalities are satisfies. We know further that $\mathbf{x}\geq \mathbf{0}$ since the payment scheme is self-contained. Minimization of the objective function $\mathbf{c}^\top \mathbf{x}$ comes directly from minimization of $\hat{\mathbf{c}}^\top \text{vec}(\L)$, as some of the weights are $\infty$. Finally, note that all parts of the reduction can be performed by maintaining a constant set of pointers, thus concluding the proof. \qed
\end{proof}

\paragraph{Removing $\infty$.} To remove $\infty$ from the optimization problem, we may add an additional dummy player $P_3$ who provides the necessary `liquidity' to $P_2$, while ensuring $P_1$'s utility is left unchanged (note that the payment scheme must be self-contained, i.e. column sums of $\L$ must be non-negative). We assign to $P_3$ arbitrary utilities in the reduction, and assign to the payments of $P_3$ the opposite weights given to $P_2$, i.e. $-\mathbf{c}_i$ instead of $\mathbf{c}_i$. The weights given to the payments of $P_1$ are all zero. It is not hard to see that the resulting payment scheme has the same set of optimal values, as optimization problems are invariant under scaling. In addition, all payments to $P_1$ must be zero as any solution with non-zero payments to $P_1$ are strictly dominated by assigning the payment to either $P_2$ or $P_3$ if the corresponding weights are non-zero. If instead, the corresponding weights of $P_2,P_3$ are both zero, we can slightly perturb the cost of $P_1$ to e.g. 1 to ensure the utility of $P_1$ is unchanged.

\section{Case Study: Secure Rational MPC from PVC}
\label{mpc}
In this section, we apply our framework to a more complicated scenario involving secure multiparty computation (MPC). Our work is similar to \cite{fbc}, in that we also use payments to incentivize honesty from a PVC protocol. However, they focus mainly on the cryptographic modeling, while our focus is mainly game-theoretic and thus complements their work. We start with a brief and informal definition of MPC for the purpose of self-containment, and refer to \cite{mpc_book} for more details and formal definitions. 

\paragraph{Secure Multiparty Computation (MPC).} In MPC, a set of $n$ mutually distrusting parties $P_1,P_2,\ldots,P_n$ want to compute a public function $f$ on their private data $\mathbf{x}=(x_1,x_2,\ldots,x_n)$. The parties engage in an interactive protocol that ends with each of them producing an output $y_i$. The goal is for the output to be \emph{correct} such that $y_i=f(\mathbf{x})$, and \emph{private}, meaning the protocol leaks no information about the inputs of the parties, other than that which can be gathered from the function output itself. This should hold even if a coalition of $t$ parties are controlled by a monolithic adversary who tries to break security of the protocol. MPC is a large research area with many proposed protocols, depending on the assumptions. One of the weakest notions of security is that of \emph{passive security} where correctness and privacy are guaranteed against an `honest-but-curious' adversary, who adheres honestly to the protocol description but tries to collect more information than they should. Such protocols are typically comparatively cheap, in contrast to protocols with \emph{active security} that remain secure even if the adversary may deviate arbitrarily from the protocol description. Active protocols are typically orders of magnitude more expensive than their passive counterparts. To remedy this, Aumann and Lindell \cite{covert} propose an intermediate notion of security called \emph{covert security} where the adversary is allowed to cheat, but is caught with some constant non-zero probability. They propose three different definitions, giving different power to the adversary. The weakest notion is `failed simulation' where the adversary learns the inputs of the honest parties when caught, while the strongest is called `strong explicit cheat formulation' where they do not. In the present section, we opt for the latter, though our model easily adapts to the former albeit with larger payments. A disadvantage of covert secure protocols is that they do not allow the participants to convince a third party who was dishonest which means they are not directly applicable to our setting. This was augmented to \emph{publicly verifiable covert security} (PVC) by Asharov and Orlandi \cite{pvc} where a proof of cheating is output that can be verified by a third party. The underlying assumption of these protocols is that the adversary suffers some cost from being caught, meaning it is rational for them not to cheat. The typical use-case is that of competing businesses who may wish to perform some joint computation on trade secrets but are not willing to risk tarnishing their name. While this may be a reasonable assumption in many cases, it is unclear that this works in e.g. an anonymous setting where the parties cannot be held accountable. Instead, we will use a payment scheme to prove it is rational for the parties not to cheat. Our plan is to analyze the information structure induced by the definition of covert security. We then apply our payment schemes to the resulting game and derive values for the deposits of the parties. 
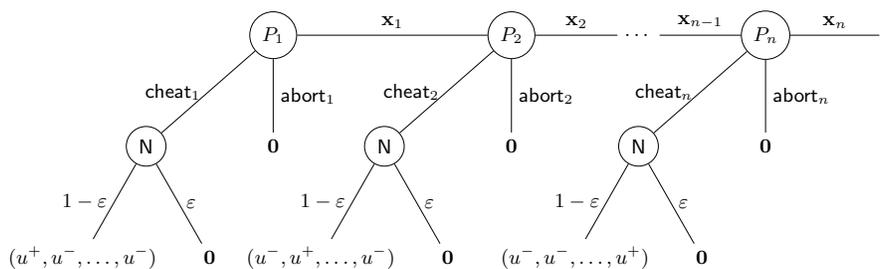
\begin{figure}
	\centering
	\resizebox{\linewidth}{!}{
		\begin{tikzpicture}
		[
		level 1/.style={sibling distance=20mm},
		level 2/.style={sibling distance=20mm},
		level 3/.style={sibling distance=20mm},
		level distance=1.75cm,align=center]
		\node[draw,circle] {$P_1$}
		child {
			node[draw,circle] {$\textsf{N}$}
			child{
				node {$(u^+,u^-,\ldots, u^-)$}
				edge from parent node [left] {$1-\varepsilon$}
			}
			child{
				node {$\mathbf{0}$}
				edge from parent node [right] {$\varepsilon$}
			}
			edge from parent node [left] {$\textsf{cheat}_1$}
		}
		child {
			node {$\mathbf{0}$}
			edge from parent node [right] {$\textsf{abort}_1$}
		}
		child [xshift=1.75cm,yshift=1.75cm]{
			node[draw,circle] {$P_2$}
			child {
				node[draw,circle] {$\textsf{N}$}
				child{
					node {$(u^-,u^+,\ldots, u^-)$}
					edge from parent node [left] {$1-\varepsilon$}
				}
				child{
					node {$\mathbf{0}$}
					edge from parent node [right] {$\varepsilon$}
				}
				edge from parent node [left] {$\textsf{cheat}_2$}
			}
			child {
				node {$\mathbf{0}$}
				edge from parent node [right] {$\textsf{abort}_2$}
			}
			child {
				node[yshift=1.75cm] {$\dots$}
				child[xshift=2cm,yshift=1.75cm]{
					node[draw,circle] {$P_n$}
					child[yshift=1.75cm,xshift=4cm]{
						node {$\mathbf{1}$}
						edge from parent node [above] {$\mathbf{x}_n$}
					}
					child[xshift=-2cm] {
						node[draw,circle] {$\textsf{N}$}
						child{
							node {$(u^-,u^-,\ldots, u^+)$}
							edge from parent node [left] {$1-\varepsilon$}
						}
						child{
							node {$\mathbf{0}$}
							edge from parent node [right] {$\varepsilon$}
						}
						edge from parent node [left] {$\textsf{cheat}_n$}
					}
					child[xshift=-2cm] {
						node {$\mathbf{0}$}
						edge from parent node [right] {$\textsf{abort}_n$}
					}
					edge from parent node [above] {$\mathbf{x}_{n-1}$}
				}
				edge from parent node [above] {$\mathbf{x}_2$}
			}
			edge from parent node [above] {$\mathbf{x}_1$}
		};
		\end{tikzpicture}
	}
	\caption{
		The ideal functionality $\Fpvc$ with the strong explicit cheat formulation represented as an extensive-form game $G_\texttt{PVC}$ rooted at $P_1$. Our goal is to augment the functionality with a payment scheme such that it is rational to behave honestly. Note that in this representation, for clarity there are two distinct leaves when a player $P_i$ attempts cheating, though in the following we `merge' the two nodes belonging to nature for simplicity. 
	}
	\label{fig:ideal_game}
\end{figure}

\paragraph{Secure Rational MPC from PVC.} We consider a set of $n$ parties $P_1, P_2, \ldots, P_n$ interacting with the ideal PVC functionality $\mathcal{F}_\texttt{PVC}$. To analyze the interaction using game theory, we need to be able to give some bounds on the utilities of the parties. In order to simplify the presentation, we assume the parties are homogeneous, in that they have the same utility functions. We further disregard the cost of running the protocol, e.g. transaction fees, such that any $\textsf{abort}_i$ gives 0 utility to all parties. Note that we can always normalize the utilities in a game as this preserves the total order. As such, we assume a party receives 1 utility if they send their input and receive back the correct output. If instead a party cheats and is successful, they receive $u^+$ utility, while a party whose input is revealed receives $u^-$ utility.  As $\Fpvc$ does not explicitly punish parties who are caught cheating, we assume a party who is caught cheating receives 0 utility. As in \cite{pvc}, we are using the strong explicit cheat formulation from \cite{covert}, so a cheater who is caught does not learn the inputs of the honest parties, and as such earns 0 utility. We are not modeling the fact that parties can send incorrect inputs, for the simple reason that it is impossible for the payment scheme, in general, to detect this. We assume that parties always send their input truthfully, or rather their true input is defined to be whatever they send to the functionality. The corresponding information structure would not be able to distinguish the two classes of leaves, the distributions would be linearly dependent, making it impossible to instantiate the deposits to ensure security. For some specific applications however, one could imagine a function that allows to determine if a party did provide the wrong input. It is not hard to augment our model to accommodate this scenario, though it is out of scope for the present paper. 

To make the problem nontrivial, we require that
$
	u^+ > 1 > 0 > u^-
$. Note that we are assuming the parties are oblivious to the utility earned by other parties. This is in contrast to \cite{halpern_teague} who assume parties strictly prefer that as few other parties learn the output as possible. This is not to circumvent their impossibility result, as this is accomplished by allowing the deposits to alter the total order of outcomes, i.e. we assume quasilinearity. Rather, it is for simplicity of exposition, though it would be interesting as future work to augment our model to this setting. We represent the interaction as an extensive-form game $G_\texttt{PVC}$, and draw the corresponding tree. W An illustration of the game tree can be found in \cref{fig:ideal_game}. It is not hard to see that when $(1-\varepsilon)\,u^+ > 1$, the only equilibrium in the game is for $P_1$ to attempt to cheat. Instead, we want all players to play honestly. First, we need to define an information structure on the game. We first remark that the structure of the game is such that only one party can deviate in any given strategy profile. This means we can define the following alphabet of possible outcomes as, 
$
\Sigma = \{\top, \textsf{abort}_1, \textsf{cheat}_1, \textsf{abort}_2, \textsf{cheat}_2, \ldots, \textsf{abort}_n,\textsf{cheat}_n\}
$. We assume the symbols are ordered left-to-right. Here $\top$ is a symbol emitted when no cheating was detected, and no aborting occurred. Note that this overloads the notation of $\textsf{abort}_i$ and $\textsf{cheat}_i$. We now analyze the information structure induced by the functionality. For simplicity, we will slightly modify the game tree in \cref{fig:ideal_game}. Namely, we collapse each subgame corresponding to a move by nature into a single leaf with expected utility $(1-\varepsilon)\,u^+$. This allows us to write a single pdf for that leaf. If we instead insist on having separate leaves, then the columns are no longer linearly independent; hence \cref{lemma:Phi_must_be_left_invertible_to_implement_any_E} does not apply directly; however, it still applies if we replace `the inverse' with `a left inverse'. This needlessly complicates the analysis, hence the simplifying assumption. If all parties are honest, we reach the outcome $\mathbf{1}$ and the symbol $\top$ is emitted. If some party $P_i$ aborts, the output of the honest parties will always be $\textsf{abort}_i$. If instead, a party attempts to cheat, with probability $\varepsilon$ they are caught and the message $\textsf{cheat}_i$ is output. If they are not caught, the symbol $\top$ is also emitted. Suppose the leaves of $G_\texttt{PVC}$ are ordered left-to-right in \cref{fig:ideal_game}, then we can write the information structure as follows.

$$
	\mathbf{\Phi}_\texttt{PVC} = \begin{pmatrix}
	0 & 1-\varepsilon & 0 & 1-\varepsilon & \cdots & 0 & 1-\varepsilon & 1 \\
	1 & 0 & 0 & 0 & \cdots& 0&0&0\\
	0 & \varepsilon & 0 & 0 & \cdots & 0&0&0\\
	0 & 0 & 1 & 0 &\cdots & 0&0&0\\
	0 & 0 & 0 & \varepsilon & \cdots & 0&0&0 \\
	\vdots &\vdots&\vdots& \vdots&& \vdots & \vdots & \vdots &  \\
	0 & 0 & 0 & 0 & \cdots & 1 & 0 & 0 \\
	0 & 0 & 0 & 0 & \cdots & 0 & \varepsilon & 0
	\end{pmatrix}
$$

\noindent It is not hard too see that when $\varepsilon>0$, all columns are linearly independent, and as such $\mathbf{\Phi}_\texttt{PVC}$ is invertible. In fact, simple Gaussian elimination implies its inverse is as follows.

$$
	\mathbf{\Phi}_\texttt{PVC}^{-1} = \begin{pmatrix}
		0 & \frac{1}{\varepsilon} & 0 & 0 & 0 & \cdots & 0 & 0\\
		0 & 0 & 1 & 0 & 0 & \cdots & 0 & 0\\
		0 & 0 & 0 & \frac{1}{\varepsilon} & 0 & \cdots & 0 & 0\\
		0 & 0 & 0 & 0 & 1 & \cdots & 0 & 0\\
		\vdots & \vdots & \vdots & \vdots & \vdots & &\vdots & \vdots \\
		0 & 0 & 0 & 0 & 0 & \cdots & \frac{1}{\varepsilon} & 0\\
		0 & 0 & 0 & 0 & 0 & \cdots & 0 & 1\\
		1 & \frac{\varepsilon-1}{\varepsilon} & 0 & \frac{\varepsilon-1}{\varepsilon} & 0  & \cdots & \frac{\varepsilon-1}{\varepsilon} & 0 
	\end{pmatrix}
$$
By \cref{lemma:Phi_must_be_left_invertible_to_implement_any_E}, we can implement any utility matrix $\mathbf{E}$. In order to obtain $(\delta+1)$-strong game-theoretic security we could for instance define the following:
$$
	\mathbf{E} = 
	\begin{pmatrix}
		-\delta & -\delta & 0 & 0 & \cdots & 0 & 0 & 1 \\
		0 & 0 & -\delta & -\delta &  \cdots & 0 & 0 & 1 \\
		\vdots & \vdots & \vdots & \vdots & & \vdots &\vdots & \vdots \\
		0 & 0 & 0&0&\cdots & -\delta & -\delta & 1 \\
		
	\end{pmatrix}
$$

\noindent In this setting, any party who deviates gains an expected utility of $-\delta$, while they gain 1 utility by following the strategy honestly. Note that we are only considering deviations by a single party, as it is not possible for multiple parties to cheat in our model. In addition, the utility matrix satisfies honest invariance, in that the utility of the honest strategy profile remains unchanged for all parties. Now, in order to compute the deposits, we again apply \cref{lemma:Phi_must_be_left_invertible_to_implement_any_E} and compute the appropriate payment scheme:
$$
	\mathbf{\Lambda}_\texttt{PVC} = (\mathbf{U} - \mathbf{E})\,\mathbf{\Phi}_\texttt{PVC}^{-1} = \begin{pmatrix}
		0 & \frac{u^+ + \delta}{\varepsilon} & \delta & \frac{u^-}{\varepsilon} & 0 & \cdots & \frac{u^-}{\varepsilon} & 0 \\
		0 & \frac{u^-}{\varepsilon} & 0 & \frac{u^+ + \delta}{\varepsilon} & \delta  & \cdots & \frac{u^-}{\varepsilon} & 0 \\
		0 & \frac{u^-}{\varepsilon} & 0 & \frac{u^-}{\varepsilon} & 0  & \cdots & \frac{u^-}{\varepsilon} & 0 \\
		0 & \frac{u^-}{\varepsilon} & 0 & \frac{u^-}{\varepsilon} & 0  & \cdots & \frac{u^-}{\varepsilon} & 0 \\
		\vdots & \vdots & \vdots & \vdots & \vdots & & \vdots & \vdots \\
		0 & \frac{u^-}{\varepsilon} & 0 & \frac{u^-}{\varepsilon} & 0 & \cdots & 
		\frac{u^+ + \delta}{\varepsilon} & \delta
	\end{pmatrix}
$$

\noindent We now briefly analyze the resulting payment scheme. We note that when $\top$ is emitted, all parties are repaid their deposits in full. When the symbol $\textsf{abort}_i$ is emitted, the party $P_i$ loses part of their deposit, while all other parties are repaid their deposit in full. Finally, when $\textsf{cheat}_i$ is emitted, the party $P_i$ loses $\frac{u^+ + \delta}{\varepsilon}$, while each $P_j$ for $j \neq i$ loses $\frac{u^-}{\varepsilon}$. Note that we assume $u^- < 0$, meaning $P_i$ actually \emph{gains money} from the payment scheme, i.e. receive back more than they initially deposited. In order for the payment scheme to not mint new money, we need the following to hold true:
$$
	\frac{u^+ + \delta}{\varepsilon} \geq -\frac{(n-1)\,u^-}{\varepsilon}
$$
That is, we must have that $\delta \geq -(u^{+}+(n-1)\,u^-)\geq 0$ for the transformation to be implementable in practice, i.e. the payment scheme must be self-contained. In other words, there is only sufficient funds left over to compensate the honest parties, if the desired level of security is sufficiently high (and hence the deposits are large).

As $\varepsilon < 1$ and $\delta > 0$, we have $\frac{u^+ + \delta}{\varepsilon} > \delta$, which means we get a deposit of size $\lambda_i^* = \frac{u^+ + \delta}{\varepsilon}$. Note that the argument is fairly easy to adapt to the non-homogeneous setting, where we would instead get $\lambda_i^* = \frac{u^+_i + \delta}{\varepsilon}$, where $u^+_i$ is the utility gained by party $P_i$ when successful in cheating. This shows the following result.

\begin{theorem}[Rational MPC]
	Let $f$ be a public function and let $P_1, P_2, \ldots, P_n$ be a set of rational parties with the following utility function: namely, each $P_i$ earns $1$ utility by learning the output of the function, and $u_i^+$ utility from learning the inputs of the other parties, while they gain $u_i^-$ utility from another party learning their input. Then $f$ can be computed with $\delta$-strong game-theoretic security by augmenting any $\varepsilon$-deterrent PVC protocol with a payment scheme where $P_i$ makes a deposit of size $\geq \frac{u_i^+ + \delta - 1}{\varepsilon}$. The protocol is self-contained if only if for each $i$, $\delta \geq -\left(u_i^+ + \sum_{j \neq i} u_i^-\right)$.
\end{theorem}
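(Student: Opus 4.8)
The plan is to turn the construction sketched in the discussion above into a proof, now stated for an arbitrary security level $\delta$ and for the non-homogeneous utilities in the theorem. First I would invoke the standard ideal/real simulation paradigm for $\varepsilon$-deterrent PVC: a deviation in such a protocol corresponds to a coalition strategy against the ideal functionality $\Fpvc$, so it suffices to establish $\delta$-strong $t$-robust game-theoretic security for the extensive-form game $G_\texttt{PVC}$ of \cref{fig:ideal_game}. As in the discussion, I would collapse each subgame rooted at a nature node into a single leaf, leaving $2n+1$ leaves (one $\textsf{abort}_i$ and one $\textsf{cheat}_i$ leaf per party, plus the all-honest leaf), and record the utility matrix $\mathbf{U}$ for general parameters: at the cheat-leaf of $P_i$ the cheater is credited $(1-\varepsilon)u_i^+$ and each other $P_j$ is credited $(1-\varepsilon)u_j^-$; every abort-leaf pays $0$ to everyone; and the all-honest leaf pays $1$ to everyone. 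The alphabet $\Sigma=\{\top\}\cup\{\textsf{abort}_i,\textsf{cheat}_i\}_{i=1}^n$ and the emission matrix $\mathbf{\Phi}_\texttt{PVC}$ are exactly those fixed in the discussion (they are independent of the utilities), and $\mathbf{\Phi}_\texttt{PVC}$ is invertible whenever $\varepsilon>0$, with the explicit inverse recorded there.

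Next I would fix the target utility matrix $\mathbf{E}$ that agrees with $\mathbf{U}$ (value $1$) on the all-honest column, so honest invariance holds, and that assigns $P_i$ the value $1-\delta$ at both its abort- and cheat-leaf (and, say, $0$ to $P_i$ at the other deviation leaves). I would then verify that the game with utilities $\mathbf{E}$ has $\delta$-strong $t$-robust game-theoretic security. The decisive structural fact, already noted, is that $G_\texttt{PVC}$ is a path-like tree in which each party in turn either cheats, aborts, or passes control on; hence in any strategy profile at most one party ever deviates from $s^*$, so the $t$-robust requirement for $t>1$ reduces to the case $t=1$. For $t=1$ and any subgame, the honest continuation yields utility $1$ to every player, whereas the only $\{i\}$-inducible utility vectors that differ from the honest one are those labelling $P_i$'s abort- or cheat-leaf, at which $P_i$ receives $1-\delta$; hence $u_i(s^*)=1=(1-\delta)+\delta$, which is exactly the inequality required by the definition.

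Then, by the $\Leftarrow$ direction of \cref{lemma:Phi_must_be_left_invertible_to_implement_any_E} (using the genuine inverse of $\mathbf{\Phi}_\texttt{PVC}$ in place of a left-inverse), $\mathbf{E}$ is implemented by $\mathbf{\Lambda}_\texttt{PVC}:=(\mathbf{U}-\mathbf{E})\,\mathbf{\Phi}_\texttt{PVC}^{-1}$. I would carry out this product explicitly; it parallels the computation in the discussion up to the reparametrization of the security level (the discussion targets $(\delta+1)$-strong security) and the substitution of $u_i^+,u_j^-$ for the homogeneous $u^+,u^-$. From the resulting $\mathbf{\Lambda}_\texttt{PVC}$ I would read off the required deposit $\lambda_i^*=\max_k(\mathbf{\Lambda}_\texttt{PVC})_{ik}$ and, using $u_i^+>1>0>u_i^-$ and $0<\varepsilon<1$, check that the maximum is attained in the $\textsf{cheat}_i$ column and is of size $\frac{u_i^++\delta-1}{\varepsilon}>0$, so that depositing (at least) that amount suffices. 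Finally, I would compute the column sums of $\mathbf{\Lambda}_\texttt{PVC}$: the columns indexed by $\top$ and by the $\textsf{abort}_i$ are non-negative by inspection, while in the $\textsf{cheat}_i$ column the honest parties $P_j$ ($j\neq i$) are compensated out of the cheater's forfeited deposit (their entries there being negative), so that column sum is non-negative exactly when $\delta\geq -\big(u_i^++\sum_{j\neq i}u_i^-\big)$; this yields the self-containedness claim in both directions.

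The only real obstacle is the bookkeeping in the matrix product $(\mathbf{U}-\mathbf{E})\,\mathbf{\Phi}_\texttt{PVC}^{-1}$: one has to push the security-parameter shift through consistently, keep track of which column of $\mathbf{\Lambda}_\texttt{PVC}$ attains each row-maximum, and locate the column carrying the binding non-negativity constraint, so that the constants $\frac{u_i^++\delta-1}{\varepsilon}$ and $-(u_i^++\sum_{j\neq i}u_i^-)$ emerge as stated. Everything else --- invertibility of $\mathbf{\Phi}_\texttt{PVC}$, the security verification via the path-like structure, and the appeal to \cref{lemma:Phi_must_be_left_invertible_to_implement_any_E} --- is routine given the machinery already in place.
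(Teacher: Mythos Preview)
Your proposal is correct and follows essentially the same approach as the paper: set up $G_\texttt{PVC}$, exhibit the information structure and the explicit inverse $\mathbf{\Phi}_\texttt{PVC}^{-1}$, pick a target utility matrix $\mathbf{E}$, invoke \cref{lemma:Phi_must_be_left_invertible_to_implement_any_E} to obtain $\mathbf{\Lambda}_\texttt{PVC}=(\mathbf{U}-\mathbf{E})\mathbf{\Phi}_\texttt{PVC}^{-1}$, and then read off the deposit size and the self-containedness condition from the columns of $\mathbf{\Lambda}_\texttt{PVC}$. The only cosmetic difference is that you set the deviating party's target payoff to $1-\delta$ so as to hit $\delta$-strong security directly, whereas the paper sets it to $-\delta$ (yielding $(\delta+1)$-strong security) and then reparametrizes; you already flag this shift, and it is precisely what produces the $-1$ in the deposit bound $\frac{u_i^++\delta-1}{\varepsilon}$.
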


In the next section, we show a general lower bound on the maximum deposit of any self-contained payment scheme, namely of size $\Omega(1+\delta \sqrt{n}/|\Sigma|)$. Note that this matches asymptotically the deposits in our MPC protocol, assuming the PVC protocol is fixed (and hence $\varepsilon,n,|\Sigma|$ are all constant).

\section{A Lower Bound on the Size of Payments}
\label{lower_bound}
In this section we prove a lower bound on the size of the largest payment necessary to achieve game-theoretic security. We show that the largest deposit must be linear in the security parameter $\varepsilon$, as well as linear in some of the utilities in the game.

To establish our bound, we use properties of matrix norms. We give a brief recap of matrix norms for the purpose of self-containment and refer to \cite{matrix_comp} for more details. We say a mapping $\norm{\cdot} : \mathbb{R}^{m \times n} \rightarrow \mathbb{R}$ is a \emph{matrix norm} if it satisfies the following properties for all matrices $\mathbf{A}, \mathbf{B} \in \mathbb{R}^{m \times n}$, and every scalar $\alpha \in \mathbb{R}$.
\begin{enumerate}
	\item (\emph{Positivity}). $\norm{\mathbf{A}} \geq 0$, and $\norm{\mathbf{A}} = 0$ iff $\mathbf{A} = \mathbf{0}$.
	\item (\emph{Homogeneity}). $\norm{\alpha \mathbf{A}} = |\alpha| \norm{\mathbf{A}}$.
	\item (\emph{Subadditivity}). $\norm{\mathbf{A} + \mathbf{B}} \leq \norm{\mathbf{A}} + \norm{\mathbf{B}}$.
\end{enumerate}
We denote by $\norm{\cdot}_p$ the matrix norm induced by the $L_p$ norm $\norm{\cdot}_p$ on vector spaces, and is defined as follows:
$$
	\norm{\mathbf{A}}_p = \sup_{\mathbf{x} \neq \mathbf{0}} \dfrac{\norm{\mathbf{A}\mathbf{x}}_p}{\norm{\mathbf{x}}_p}
$$
If in addition, $\norm{\mathbf{A}\mathbf{B}} \leq \norm{\mathbf{A}} \cdot \norm{\mathbf{B}}$, we say $\norm{\cdot}$ is \emph{submultiplicative}. It can be shown that $\norm{\cdot}_p$ is submultiplicative for any value of $p$. Some special cases that we will need are $p=1,2,\infty$ which can be characterized as follows. The quantity $\norm{\mathbf{A}}_1$ equals the maximum absolute column sum of the columns of $\mathbf{A}$, while the quantity $\norm{\mathbf{A}}_\infty$ gives the maximum absolute row sum of the rows of $\mathbf{A}$. Our lower bound is established by noting that we know these sums for the matrices used in our framework. An example of a matrix norm that is not submultiplicative is the max norm, $\norm{\mathbf{A}}_\text{max} = \max_{i,j} |\mathbf{A}_{ij}|$. However, we can relate this norm to $\norm{\cdot}_2$ using the following identity.
\begin{equation}\label{eq:max_L2}
	\norm{\mathbf{A}}_2 \geq \norm{\mathbf{A}}_\text{max} \geq \frac{\norm{\mathbf{A}}_2}{\sqrt{mn}}
\end{equation}
We will need the fact that all matrix norms are equivalent up to scalar multiple, in the sense that each pair of matrix norms $\norm{\cdot}_a, \norm{\cdot}_b$ are related by
$
\alpha \norm{\mathbf{A}}_a \leq \norm{\mathbf{A}}_b \leq \beta \norm{\mathbf{A}}_a,
$
for some constants $\alpha,\beta \in \mathbb{R}$. For our purposes, we need the following bounds:
\begin{align}
	\label{eq:L2_bounded_by_L1}
	\frac{1}{\sqrt{m}} \norm{\mathbf{A}}_1 &\leq \norm{\mathbf{A}}_2 \leq \sqrt{n} \norm{\mathbf{A}}_1\\
	\label{eq:L2_bounded_by_Linfty}
	\frac{1}{\sqrt{n}} \norm{\mathbf{A}}_\infty &\leq \norm{\mathbf{A}}_2 \leq \sqrt{m} \norm{\mathbf{A}}_\infty
\end{align}

\paragraph{Establish the lower bound.} Let $G$ be a fixed game with information structure $\langle \Sigma, \P \rangle$. Let $(\delta,t)$ be fixed, and let $\AA{t}, \e{t}$ be the corresponding constraints. We denote by $\alpha^{(t)}$ the number of rows in $\AA{t}$. Now, let $\L$ be any feasible payment scheme. We have already seen that any such $\L$ is a solution to the following equation:
\begin{equation}\label{eq:feasible_L}
	\AA{t} \L \P \leq \AA{t} \U - \e{t} 
\end{equation}
Applying \cref{eq:max_L2,eq:feasible_L} and the properties of $\norm{\cdot}_2$, we establish the following bound:
\begin{equation}\label{eq:Lmax_bound}
	\norm{\L}_\text{max} \geq \frac{1}{\sqrt{n\,|\Sigma|}} \left(\dfrac{\norm{\AA{t}\U}_2+\norm{\e{t}}_2}{\norm{\AA{t}}_2 \cdot \norm{\P}_2} \right)
\end{equation}
Each row of $\e{t}$ is filled with $\delta$, so the resulting absolute row sum is $\delta n$. Similarly, each row of $\AA{t}$ contains exactly one 1 and one -1, so each absolute row sum is 2. Finally, each column of $\P$ is a pdf, so its absolute row sum is 1. Combining these insights with \cref{eq:L2_bounded_by_L1,eq:L2_bounded_by_Linfty} and substituting in \cref{eq:Lmax_bound} gives the following bound:
\begin{align*}
\norm{\L}_\text{max} &\geq \frac{1}{\sqrt{n\,|\Sigma|}} \left(\dfrac{\norm{\AA{t}\U}_2+\sqrt{\alpha^{(t)}} \delta n}{2 \sqrt{\alpha^{(t)}} \cdot \sqrt{|\Sigma|}} \right)= \frac{1}{2\,|\Sigma|}\left(\delta\sqrt{n} + \frac{\norm{\AA{t}\U}_2}{\sqrt{n\,\alpha^{(t)}}}\right)\label{eq:Lmax_bound2}
\end{align*}
We note that in general, there is not much to say about $\norm{\AA{t}\U}_2$, as $\U$ can lie in the kernel of $\AA{t}$. This occurs if $\U$ already establishes exact $\delta$-strong $t$-robust game-theoretic security. 

Note that the bound, strictly speaking, is a bound on the largest \emph{absolute} deposit necessary to achieve security, while we are interested in bounding the largest \emph{positive} deposit, denoted instead by $\L_\text{max}^*$. If the game already is secure, the bound for the largest deposit should be zero, while the above bound is positive for any $\delta>0$. Indeed, $\norm{\L}_\text{max}\neq \L_\text{max}^*$ iff we can pay more to a party to misbehave \emph{and still retain security} than what we have to pay another party to behave properly. We note that this depends on the structure of the game and the intended strategy profile. In particular, it is independent of the security parameter. For this reason, we denote by $\Delta_G^{(t)}(s^*)$ the minmax deposit required to obtain 0-strong $t$-robust game-theoretic security. We note that $\Delta_G^{(t)} > 0$ iff the game is not secure for any $\delta \geq 0$, while $\Delta_G^{(t)} \leq 0$ iff the game is already secure for $\delta=0$.  

\noindent We note that by definition, $\Delta_G^{(t)}$ is a trivial lower bound on the size of the maximum deposit. We combine this with the above bounds to yield the following lower bound:

\begin{theorem}\label{thm:lower_bound}
	Let $G$ be a game on $n$ players with an information structure $\langle \Sigma, \P\rangle$, and let $s^*$ be the intended strategy profile. If $\L$ is self-contained and ensures $\delta$-strong $t$-robust game-theoretic security, then the maximum deposit must satisfy $	\L^*_\text{max} \geq \Delta_G^{(t)}(s^*) + \Omega\left(\frac{\delta\sqrt{n}}{|\Sigma|}\right)$.
\end{theorem}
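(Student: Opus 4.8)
The plan is to derive the bound by placing side by side two lower bounds on $\L^*_\text{max}$ that depend on the security parameter $\delta$ in complementary ways: a purely structural one, $\Delta_G^{(t)}(s^*)$, and the matrix-norm bound already assembled in the preceding discussion. For the norm bound I would take \cref{eq:feasible_L} as the starting point and chase the norm inequalities step by step: submultiplicativity of $\norm{\cdot}_2$ to pull $\L$ out of the triple product $\AA{t}\L\P$; the max-versus-$L_2$ comparison \cref{eq:max_L2} to descend to $\norm{\L}_\text{max}$; and the norm-equivalence estimates \cref{eq:L2_bounded_by_L1,eq:L2_bounded_by_Linfty} to replace $\norm{\AA{t}}_2$, $\norm{\P}_2$ and $\norm{\e{t}}_2$ by the row- and column-sums we actually control, namely that each row of $\AA{t}$ carries a single $+1$ and a single $-1$, each column of $\P$ is a pdf, and each row of $\e{t}$ sums to $\delta n$. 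This is exactly the computation culminating in \cref{eq:Lmax_bound}; dropping from the simplified form the non-negative term $\norm{\AA{t}\U}_2/\sqrt{n\,\alpha^{(t)}}$ leaves $\norm{\L}_\text{max} \geq \delta\sqrt{n}/(2\,|\Sigma|) = \Omega(\delta\sqrt{n}/|\Sigma|)$.

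For the structural bound I would simply note that any self-contained $\L$ that witnesses $\delta$-strong $t$-robust game-theoretic security (with $\delta \geq 0$) in particular witnesses $0$-strong $t$-robust game-theoretic security, so by the very definition of $\Delta_G^{(t)}(s^*)$ as the minmax positive deposit over self-contained schemes meeting that weaker condition, we immediately get $\L^*_\text{max} \geq \Delta_G^{(t)}(s^*)$. This contribution carries no $\delta$; it reflects only the geometry of the game and of $s^*$.

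The step I expect to be the crux is reconciling the two, since the norm bound controls $\norm{\L}_\text{max}$, the largest entry of $\L$ in absolute value, whereas the theorem speaks of $\L^*_\text{max}$, the largest positive deposit, and a priori $\L^*_\text{max} \leq \norm{\L}_\text{max}$. The gap can only be opened by large negative entries (compensations), and the argument I would run is that self-containedness ties the two together: a negative entry of magnitude $M$ in column $k$ forces the positive entries of that column to sum to at least $M$, so a scheme cannot push its $\delta$-driven magnitude into refunds without paying it out again as deposits; the only portion of $\norm{\L}_\text{max}$ that is genuinely $\delta$-free — and hence the only portion whose absolute-versus-positive discrepancy is not already accounted for — is exactly the structural quantity folded into $\Delta_G^{(t)}(s^*)$, which is by construction independent of the security parameter. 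Putting the $\Omega(\delta\sqrt{n}/|\Sigma|)$ contribution from the norm bound on top of the $\Delta_G^{(t)}(s^*)$ already required then yields $\L^*_\text{max} \geq \Delta_G^{(t)}(s^*) + \Omega(\delta\sqrt{n}/|\Sigma|)$, with the $\Omega$ concealing only the constant $1/2$ while the dependence on $n$ and $|\Sigma|$ is kept explicit.
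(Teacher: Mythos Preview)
Your proposal follows essentially the same route as the paper: derive the norm bound on $\norm{\L}_\text{max}$ from \cref{eq:feasible_L} via submultiplicativity and the norm comparisons \cref{eq:max_L2,eq:L2_bounded_by_L1,eq:L2_bounded_by_Linfty}, invoke $\Delta_G^{(t)}(s^*)$ as the trivial structural lower bound on $\L^*_\text{max}$, and then combine the two after arguing that the gap between $\norm{\L}_\text{max}$ and $\L^*_\text{max}$ is governed by the game and $s^*$ rather than by $\delta$. Your explicit appeal to self-containedness (a large negative entry in a column forces comparably large positive entries in that column) is a concrete justification that the paper leaves implicit; otherwise the two arguments coincide, and both treat the final ``combining'' step at the same informal level.
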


\section{Conclusion and Future Work}
In this paper, we proposed a generic mechanism for incentivizing behavior in games by the use of payments. We analyzed the complexity of finding an optimal payment scheme and found it to be equivalent to linear programming. We demonstrated the applicability of our framework to concrete problems in distributed computing, namely decentralized commerce and secure multiparty computation. Finally, we proved a lower bound on the payments, showing that the largest payment must be linear in the security parameter for any self-contained payment scheme.

We hope that our framework will find applications in distributed computing, giving a simple, yet expressive model for instantiating payments in a variety of protocols. However, more attention is needed for games of imperfect information; we conjecture it is unlikely there is a generic and efficient way to find payments for such games, though a formal reduction would be ideal.

\bibliographystyle{alpha}
\bibliography{refs}

\end{document}